\documentclass[journal ]{new-aiaa}
\usepackage[utf8]{inputenc}
\usepackage{textcomp}

\usepackage{graphicx}
\usepackage{amsmath}

\usepackage[version=4]{mhchem}
\usepackage{siunitx}
\usepackage{longtable,tabularx}
\usepackage{caption}
\usepackage{booktabs}
\usepackage{amsthm}
\usepackage{float}
\usepackage{bm}
\usepackage{amssymb}  
\usepackage{array}
\usepackage{comment}
\usepackage{empheq}

\newcommand{\withtablespacing}[2]{%
  {%
  \renewcommand{\arraystretch}{#1}%
  #2%
  }%
}
\newcommand{\spaziatura}{0.8}
\newcommand{\matrice}[1]{\bm{\mathbf{#1}}}

\makeatletter
\newcommand{\belong}{\@ifnextchar\bgroup{\@belong@onearg}{\in \mathbb{R}^{3\times 3}}}
\newcommand{\@belong@onearg}[1]{\@ifnextchar\bgroup{\@belong@twoargs{#1}}{\in \mathbb{R}^{#1}}}
\newcommand{\@belong@twoargs}[2]{\in \mathbb{R}^{#1\times #2}}
\makeatother

\newcommand{\norm}[1]{\lVert #1 \lVert}
\newcommand{\Smat}{\bm{\mathcal{S}}}

\newcommand{\sgn}{\mathrm{sgn}}

\newcommand{\diag}{\mathrm{diag}}

\newcommand{\R}{\mathbb{R}}

\newcommand{\ie}{\textit{i.e.}}
\newcommand{\eg}{\textit{e.g.}}
\newcommand{\tenE}[1]{\cdot 10^{#1} }

\newtheorem{proposition}{Proposition}

\newtheorem{assumption}{Assumption}
\newtheorem{requirement}{Requirement}

\newcommand{\refcor}[1]{\hyperref[#1]{Corollary~\ref*{#1}}}
\newcommand{\reffig}[1]{\hyperref[#1]{Fig.~\ref*{#1}}}
\newcommand{\reftable}[1]{\hyperref[#1]{Table~\ref*{#1}}}
\newcommand{\refsec}[1]{\hyperref[#1]{Section~\ref*{#1}}}
\newcommand{\reftheorem}[1]{\hyperref[#1]{Theorem~\ref*{#1}}}
\newcommand{\refdefinition}[1]{\hyperref[#1]{Def.~\ref*{#1}}}
\newcommand{\refproposition}[1]{\hyperref[#1]{Prop.~\ref*{#1}}}
\newcommand{\refassumption}[1]{\hyperref[#1]{Ass.~\ref*{#1}}}
\newcommand{\refrequirement}[1]{\hyperref[#1]{Req.~\ref*{#1}}}

\newcommand{\sottosistema}[2]{ 

\begin{subequations} #2

  \begin{empheq}[left=\empheqlbrace]{align}
#1
  \end{empheq}
\end{subequations}

}

\usepackage{import}
\usepackage{tikz}
\usepackage{xcolor}
\usetikzlibrary{arrows.meta,calc,positioning}

\newcommand{\drawRotatedCircle}[6]{%
\begin{scope}
  \coordinate (C) at (#1,#2);
  \def\R{#3}
  \def\a{#4}
  \def\b{#5}
  \def\c{#6}

  \pgfmathsetmacro{\exx}{cos(\c)*cos(\b)}
  \pgfmathsetmacro{\exy}{sin(\c)*cos(\b)}

  \pgfmathsetmacro{\eyx}{-sin(\c)*cos(\a)+cos(\c)*sin(\b)*sin(\a)}
  \pgfmathsetmacro{\eyy}{ cos(\c)*cos(\a)+sin(\c)*sin(\b)*sin(\a)}

  \draw [
  color={rgb,255:red,0; green,29; blue,255},
  line width=1.5
]
    ($(C)+(\R*\exx,\R*\exy)$)
    arc[start angle=0, end angle=360,
        x radius=\R,
        y radius=\R,
        xslant=\eyx/\exx,
        yslant=\eyy/\exy];
\end{scope}
}

\newcommand{\ReferenceFrameFig}{
\begin{figure}[hbt]
\tikzset{every picture/.style={line width=0.75pt}} 
\centering
\begin{tikzpicture}[x=0.65pt,y=0.65pt,yscale=-1,xscale=1]

\drawRotatedCircle{349}{263}{83}{1}{11}{7}

\draw  [fill={rgb, 255:red, 255; green, 255; blue, 255 }  ,fill opacity=1 ] (339.33,121.32) .. controls (339.33,119.12) and (350.68,117.33) .. (364.67,117.33) -- (364.67,126) .. controls (350.68,126) and (339.33,127.78) .. (339.33,129.98) ;\draw  [fill={rgb, 255:red, 255; green, 255; blue, 255 }  ,fill opacity=1 ] (339.33,129.98) .. controls (339.33,131.62) and (345.65,133.03) .. (354.67,133.64) -- (354.67,136.28) -- (364.67,129.63) -- (354.67,122.34) -- (354.67,124.98) .. controls (345.65,124.37) and (339.33,122.96) .. (339.33,121.32)(339.33,129.98) -- (339.33,121.32) ;
\draw [color={rgb, 255:red, 208; green, 2; blue, 27 }  ,draw opacity=1 ][line width=1.5]    (409.04,222.71) -- (414.47,221.8) -- (481.83,241.54) ;
\draw [shift={(485.67,242.67)}, rotate = 196.34] [fill={rgb, 255:red, 208; green, 2; blue, 27 }  ,fill opacity=1 ][line width=0.08]  [draw opacity=0] (11.61,-5.58) -- (0,0) -- (11.61,5.58) -- cycle    ;
\draw [color={rgb, 255:red, 208; green, 2; blue, 27 }  ,draw opacity=1 ][line width=1.5]    (409.04,222.71) -- (385.08,286.26) ;
\draw [shift={(383.67,290)}, rotate = 290.66] [fill={rgb, 255:red, 208; green, 2; blue, 27 }  ,fill opacity=1 ][line width=0.08]  [draw opacity=0] (11.61,-5.58) -- (0,0) -- (11.61,5.58) -- cycle    ;
\draw [color={rgb, 255:red, 208; green, 2; blue, 27 }  ,draw opacity=1 ][line width=1.5]    (409.04,222.71) -- (443.34,174.59) ;
\draw [shift={(445.67,171.33)}, rotate = 125.49] [fill={rgb, 255:red, 208; green, 2; blue, 27 }  ,fill opacity=1 ][line width=0.08]  [draw opacity=0] (11.61,-5.58) -- (0,0) -- (11.61,5.58) -- cycle    ;
\draw [line width=1.5]    (349,264) -- (349,105.33) ;
\draw [shift={(349,101.33)}, rotate = 90] [fill={rgb, 255:red, 0; green, 0; blue, 0 }  ][line width=0.08]  [draw opacity=0] (11.61,-5.58) -- (0,0) -- (11.61,5.58) -- cycle    ;
\draw [line width=1.5]    (349,264) -- (504.2,264) ;
\draw [shift={(508.2,264)}, rotate = 180] [fill={rgb, 255:red, 0; green, 0; blue, 0 }  ][line width=0.08]  [draw opacity=0] (11.61,-5.58) -- (0,0) -- (11.61,5.58) -- cycle    ;
\draw [line width=1.5]    (349,264) -- (253.03,359.97) ;
\draw [shift={(250.2,362.8)}, rotate = 315] [fill={rgb, 255:red, 0; green, 0; blue, 0 }  ][line width=0.08]  [draw opacity=0] (11.61,-5.58) -- (0,0) -- (11.61,5.58) -- cycle    ;
\draw [color={rgb, 255:red, 65; green, 117; blue, 5 }  ,draw opacity=1 ][fill={rgb, 255:red, 65; green, 117; blue, 5 }  ,fill opacity=1 ][line width=1.5]    (349.8,264) -- (385.08,239.13) -- (416.13,217.23) -- (499.93,158.15) ;
\draw [shift={(503.2,155.85)}, rotate = 144.82] [fill={rgb, 255:red, 65; green, 117; blue, 5 }  ,fill opacity=1 ][line width=0.08]  [draw opacity=0] (11.61,-5.58) -- (0,0) -- (11.61,5.58) -- cycle    ;
\draw [color={rgb, 255:red, 65; green, 117; blue, 5 }  ,draw opacity=1 ][fill={rgb, 255:red, 65; green, 117; blue, 5 }  ,fill opacity=1 ][line width=1.5]    (409.64,220.05) -- (362.97,167.83) ;
\draw [shift={(360.3,164.85)}, rotate = 48.21] [fill={rgb, 255:red, 65; green, 117; blue, 5 }  ,fill opacity=1 ][line width=0.08]  [draw opacity=0] (11.61,-5.58) -- (0,0) -- (11.61,5.58) -- cycle    ;
\draw [color={rgb, 255:red, 65; green, 117; blue, 5 }  ,draw opacity=1 ][fill={rgb, 255:red, 65; green, 117; blue, 5 }  ,fill opacity=1 ][line width=1.5]    (408.64,220.55) -- (438.79,125.21) ;
\draw [shift={(440,121.4)}, rotate = 107.55] [fill={rgb, 255:red, 65; green, 117; blue, 5 }  ,fill opacity=1 ][line width=0.08]  [draw opacity=0] (11.61,-5.58) -- (0,0) -- (11.61,5.58) -- cycle    ;
\draw  [fill={rgb, 255:red, 0; green, 0; blue, 0 }  ,fill opacity=1 ] (404.57,221.8) .. controls (404.57,219.07) and (406.79,216.85) .. (409.52,216.85) .. controls (412.26,216.85) and (414.47,219.07) .. (414.47,221.8) .. controls (414.47,224.53) and (412.26,226.75) .. (409.52,226.75) .. controls (406.79,226.75) and (404.57,224.53) .. (404.57,221.8) -- cycle ;
\draw  [color={rgb, 255:red, 245; green, 166; blue, 35 }  ,draw opacity=1 ][fill={rgb, 255:red, 255; green, 116; blue, 0 }  ,fill opacity=1 ] (391.8,205.85) .. controls (391.8,203.12) and (394.02,200.9) .. (396.75,200.9) .. controls (399.48,200.9) and (401.7,203.12) .. (401.7,205.85) .. controls (401.7,208.58) and (399.48,210.8) .. (396.75,210.8) .. controls (394.02,210.8) and (391.8,208.58) .. (391.8,205.85) -- cycle ;

\draw [color={rgb, 255:red, 189; green, 16; blue, 224 }  ,draw opacity=1 ]   (365.2,75.7) -- (406.61,118.76) ;
\draw [shift={(408,120.2)}, rotate = 226.12] [color={rgb, 255:red, 189; green, 16; blue, 224 }  ,draw opacity=1 ][line width=0.75]    (10.93,-3.29) .. controls (6.95,-1.4) and (3.31,-0.3) .. (0,0) .. controls (3.31,0.3) and (6.95,1.4) .. (10.93,3.29)   ;
\draw [color={rgb, 255:red, 189; green, 16; blue, 224 }  ,draw opacity=1 ]   (373.2,69.7) -- (414.61,112.76) ;
\draw [shift={(416,114.2)}, rotate = 226.12] [color={rgb, 255:red, 189; green, 16; blue, 224 }  ,draw opacity=1 ][line width=0.75]    (10.93,-3.29) .. controls (6.95,-1.4) and (3.31,-0.3) .. (0,0) .. controls (3.31,0.3) and (6.95,1.4) .. (10.93,3.29)   ;
\draw [color={rgb, 255:red, 189; green, 16; blue, 224 }  ,draw opacity=1 ]   (381.2,62.7) -- (422.61,105.76) ;
\draw [shift={(424,107.2)}, rotate = 226.12] [color={rgb, 255:red, 189; green, 16; blue, 224 }  ,draw opacity=1 ][line width=0.75]    (10.93,-3.29) .. controls (6.95,-1.4) and (3.31,-0.3) .. (0,0) .. controls (3.31,0.3) and (6.95,1.4) .. (10.93,3.29)   ;

\draw (313,155.33) node [anchor=north west][inner sep=0.75pt]    {$\mathcal{F}_{i}$};
\draw (475.83,129) node [anchor=north west][inner sep=0.75pt]  [color={rgb, 255:red, 65; green, 117; blue, 5 }  ,opacity=1 ]  {$\mathcal{F}_{h}$};
\draw (508.33,208.17) node [anchor=north west][inner sep=0.75pt]  [color={rgb, 255:red, 208; green, 2; blue, 27 }  ,opacity=1 ]  {$\mathcal{F}_{b}$};
\draw (404.3,231.4) node [anchor=north west][inner sep=0.75pt]    {$f$};
\draw (391.3,176.4) node [anchor=north west][inner sep=0.75pt]  [color={rgb, 255:red, 255; green, 116; blue, 0 }  ,opacity=1 ]  {$l$};
\draw (269.8,352.9) node [anchor=north west][inner sep=0.75pt]    {$x^{i}$};
\draw (493.8,272) node [anchor=north west][inner sep=0.75pt]    {$y^{i}$};
\draw (358.3,87) node [anchor=north west][inner sep=0.75pt]    {$z^{i}$};
\draw (508.3,138.17) node [anchor=north west][inner sep=0.75pt]  [color={rgb, 255:red, 65; green, 117; blue, 5 }  ,opacity=1 ]  {$x^{h}$, $\bm{e}_r$};
\draw (364.3,143.67) node [anchor=north west][inner sep=0.75pt]  [color={rgb, 255:red, 65; green, 117; blue, 5 }  ,opacity=1 ]  {$y^{h}$, $\bm{e}_\theta$};
\draw (444,104.6) node [anchor=north west][inner sep=0.75pt]  [color={rgb, 255:red, 65; green, 117; blue, 5 }  ,opacity=1 ]  {$z^{h}$, $\bm{e}_h$};
\draw (448.47,162.57) node [anchor=north west][inner sep=0.75pt]  [color={rgb, 255:red, 208; green, 2; blue, 27 }  ,opacity=1 ]  {$z^{b}$};
\draw (391.13,279.23) node [anchor=north west][inner sep=0.75pt]  [color={rgb, 255:red, 208; green, 2; blue, 27 }  ,opacity=1 ]  {$x^{b}$};
\draw (486.47,233.23) node [anchor=north west][inner sep=0.75pt]  [color={rgb, 255:red, 208; green, 2; blue, 27 }  ,opacity=1 ]  {$y^{b}$};
\draw (314.33,116.67) node [anchor=north west][inner sep=0.75pt]    {$\bm{\omega }_{\oplus }^{i}$};
\draw (387.6,367) node   [align=left] {
\textcolor[rgb]{0,0.11,1}{Follower Orbit}};

\draw (287.33,66.67) node [anchor=north west,inner sep=0.75pt] {%
  {\color[rgb]{0.74,0.06,0.88}%
   $\Smat\!(\bm{\omega}_{\oplus}^{\,i})\bm{r}_{f}^{\,i}$}%
};

\begin{scope}
  \def\Rearth{14pt}

  \clip (349,264) circle (\Rearth);

  \node at (349,264) {\includegraphics[width=36pt]{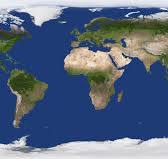}};
\end{scope}

\draw[black, line width=0.6pt] (349,264) circle (14pt);

\end{tikzpicture}    
\caption{ECI and Hill reference frames.}
\label{Frames}
\end{figure}
}

\usepackage{xstring}

\newcommand{\doi}[1]{%
\IfSubStr{#1}{doi.org/}
  {\StrBehind{#1}{doi.org/}[\cleandoi]}
  {\def\cleandoi{#1}}%
DOI: \href{https://doi.org/\cleandoi}{\cleandoi}%
}

\title{Leader-Follower Formation Control Using Differential Drag and Effective Surface Regulation\footnote{This work was presented at the 2025 AAS/AIAA Astrodynamics Specialist Conference, Boston, Massachusetts, August 10-14 2025. Paper number: AAS 25-661 \cite{bocci2025AstrodynSpecCOnf}. }}

\author{Alessio Bocci\footnote{PhD Candidate,  Department of Electrical Engineering, Science and Technology, alessio.bocci@uit.no (corresponding author).}   , Jos\'{e} Juan Corona-S\'{a}nchez\footnote{Associate Professor, Department of Electrical Engineering, Science and Technology, jose.j.sanchez@uit.no.}, and Raymond Kristiansen\footnote{Professor, Department of Electrical Engineering, Science and Technology, raymond.kristiansen@uit.no.}}
\affil{ UiT - The Arctic University of Norway, 8514 Narvik, Norway}

\begin{document}

\maketitle

\section*{Nomenclature}

{\renewcommand\arraystretch{1.0}
\noindent\begin{longtable*}{@{}l @{\quad=\quad} l@{}}

$\bm{a}_D$, $C_D$ & drag acceleration $[$m/s$^2$$]$ and drag coefficient $[-]$\\
$\bm{a}_P$ & perturbative accelerations' vector $[$m/s$^2$$]$\\
$\mathcal{F}_b$, $\mathcal{F}_h$, $\mathcal{F}_i$ & spacecraft body frame, Hill frame, and Earth-centered inertial frame  \\
$\bm{g}$, $\matrice{Q}$, $\mathfrak{R}$ & LQR control gain and costs\\
$\mathscr{i}_f$ & orbit inclination $[$rad$]$\\
$\matrice{J}$, $M$ & inertia tensor $[$kg$\cdot$m$^2$$]$ and mass $[$kg$]$\\
$k_s,\, \matrice{K}_\omega$ & surface gain $[$1/s$]$ and angular velocity gain $[$N$\cdot$m$\cdot$s$]$ \\
$\bm{m}$, $\hat{\bm{m}}$  & $\sgn(\bm{\eta})$  and value at $t=\hat{t}$ $[-]$\\
$\bm{n}_j$ & normal vector of the jth spacecraft surface $[-]$\\  
$\bm{q}$ & unit quaternion $[-]$\\
$\matrice{R}_i^b$, $\matrice{R}_i^h$ & $\mathcal{F}_i$ to $\mathcal{F}_b$ rotation matrix and $\mathcal{F}_i$ to $\mathcal{F}_h$ rotation matrix\\
$\bm{r}^i$, $\bm{v}^i$  & spacecraft position vector $[$m$]$ and velocity vector $[$m/s$]$ \\
$\Smat(\cdot)$ & cross product matrix \\
$S$, $\Delta S$ & effective surface and leader-follower effective surface difference $[$m$^2$$]$\\
$\bm{s}$, $\matrice{\sigma}$ & $[\hat{S}_1,\hat{S}_2,\hat{S}_3]$ spacecraft nominal surfaces vector and matrix $[$m$^2$$]$ \\
$\bm{s}_m$ & $\bm{\sigma}\hat{\bm{m}}$ signed surface vector at $t=\hat{t}$ $[$m$^2$$]$\\
$\hat{t}$ & beginning instant of control operations $[$s$]$\\
$T$, $\nu$ & orbital period $[$s$]$ and angular velocity $[$rad/s$]$ \\
$\bm{u}^b$ & control torque $[$N$\cdot$m$]$\\
$\bm{w}^i$ & atmosphere-spacecraft relative velocity $[$m/s$]$\\
$\bm{x}^h$ & leader-follower relative state \\

$\Gamma_s$ & admissible effective surface range $[$m$^2$$]$\\
$\bm{\Gamma}_\omega$ & set of constraints on $\bm{\omega}^b$\\
$\delta\bm{r}^h$ & leader-follower relative position vector in $\mathcal{F}_h$ $[$m$]$\\
$\lambda_\mathscr{i}$ & coefficient of the Hill-Clohessy-Wiltshire equations $[$1/s$^2$$]$ \\
$\mu$, $\bm{\omega}_\oplus$ & Earth gravitational constant $[$m$^3$/s$^2 $$]$ and angular velocity $[$rad/s$]$  \\
$\bm{\xi}^i$, $\bm{\eta}^b$ & $\bm{w}^i/\norm{\bm{w}^i}$ and its projection onto $\mathcal{F}_b$ $ [-]$\\
$\rho$, $\beta$ & atmospheric density $[$kg/m$^3$$]$ and $\rho C_D/(2M)$ $[$1/m$^3$$]$\\
 $\phi$, $\bm{\psi}^b$ &$\bm{s}_m^\top   \matrice{R}_i^b \dot{\bm{\xi}}^i$  $[$m$^2$/s$]$ and $- \bm{\mathcal{S}} (\bm{\eta}^b)\bm{s}_m$  $[$m$^2$$]$ (coefficients of the effective surface dynamics) \\
$\bm{\omega}^b$ & spacecraft angular velocity $[$rad/s$]$ \\

\multicolumn{2}{@{}l}{Subscripts}\\
$d$ & desired \\
$e$&error\\
$f$ & follower\\
$l$ & leader\\

\multicolumn{2}{@{}l}{Superscripts}\\
$b$ & quantity expressed in $\mathcal{F}_b$\\
$h$ & quantity expressed in $\mathcal{F}_h$\\
$i$ & quantity expressed in $\mathcal{F}_i$\\

\end{longtable*}}
\setcounter{table}{0}

\section{Introduction}

The modern age features significant space activity, with many countries prioritizing space exploration. CubeSats lower mission costs, but their lack of propulsion poses challenges for orbital control (OC). Standard CubeSat OC exploits atmospheric drag, whereas constellations or formations employ differential drag (DD). In leader-follower formations, the "leader" (or "chief" or "target") is the main spacecraft, and the "follower" (or "deputy" or "chaser") operates relative to it. 

Research on DD-based OC began in 1989 with Leonard et al. \cite{leonard1989orbital}, who aimed to provide an alternative to the traditional chemical-thruster-based methodology \cite{vassar1985formationkeeping,redding1989linear}. The controller, actuated by tiltable drag panels, regulates the follower’s mean position relative to the leader while incorporating an eccentricity-minimization scheme. Kumar et al. \cite{kumar2011differential} explored the altitude range for DD formation control on a circular Sun-synchronous orbit. They concluded that altitudes below 400 km are optimal and proposed a proportional-integral-derivative (PID) relative position control.  

Bevilacqua and Romano \cite{bevilacqua2008rendezvous} designed a DD-based OC for multiple-spacecraft rendezvous maneuvers actuated by retractable drag plates. Based on the Schweighart-Sedwick equations  \cite{schweighart2002high}, the algorithm drives the chasers to a closed orbit around the target and then acts on its eccentricity to nullify the semimajor axis. In  \cite{bevilacqua2010multiple}, the authors improved the design by allowing the chasers' ballistic coefficients to vary while keeping the target's constant.

Horsley et al. \cite{horsley2011investigation,horsley2013small} explored a generic aerodynamic force control approach. Indeed, DD only allows control of the relative in-plane motion, whereas differential lift can control the out-of-plane motion. However, the lift is significantly weaker than the drag, so this method is only beneficial at low relative inclinations and altitudes. Similarly, Shao et al. \cite{shao2015satellite} investigated aerodynamic-force-based control for rendezvous maneuvers. Their controller significantly reduces the relative position error and is actuated by tiltable drag panels.

 Omar and Wersinger \cite{omar2015satellite} considered two identical CubeSats in formation flight on the same orbit and designed a DD-based OC that switches between minimum and maximum drag configurations and that requires only a 3-axis attitude control (feasible for many CubeSats). However, intermediate configurations are also exploitable, as shown by Smith et al. \cite{smith2017fast}, who investigated the minimum-time establishment of CubeSat constellations.

Ben-Yaacov and Gurfil \cite{ben2013long} designed a nonlinear DD-based OC for long-term satellite cluster flight using relative mean orbital elements. They showed that the relative semimajor axis can be regulated by modulating the cross-sectional area, whereas the relative eccentricity is uncontrollable for quasi-circular orbits. In \cite{benyaacov2014stability}, they improved the design by introducing the differential Brouwer-Lyddane mean elements, investigating the possibility of controlling the differential mean inclination leveraging the out-of-plane component of the DD force, and developing a nonlinear DD-based collision avoidance
controller. Further analysis of the controller's robustness was provided in \cite{ben2016covariance}.

Differential drag control is inherently an orbit-attitude-coupled problem. Pastorelli et al. \cite{pastorelli2015differential} introduced the virtual thrusters (\ie, the drag sails) concept and proposed a Lyapunov controller with on/off actuators logic, simultaneously controlling both
positional and rotational motion using drag only. Similarly, Sun et al. \cite{sun2017roto} developed an orbit-attitude controller using rotatable and extendable flat plates as the sole actuators for generating aerodynamic forces and torques.

 Ivanov et al. \cite{ivanov2018study} considered two spherical spacecraft in formation flight with solar panels and proposed a DD force model parametrized using two unit vectors: one normal to the solar panels and the other aligned with the incoming airflow velocity. They expressed the DD force as a function of four attitude angles, two latitudes, and two longitudes. Their algorithm first evaluates the linear-quadratic regulator (LQR) optimal orbit control and then solves nonlinear equations to obtain the attitude angles. A drawback of the method is that the existence of a solution to the nonlinear system depends on the magnitude of the LQR output. Moreover, it can be computationally expensive.

Harris et al. \cite{harris2020linear} derived a DD-based orbital-attitude controller. The relative orbital motion was modeled using the Hill-Clohessy-Wiltshire (HCW) plus drag formulation of Silva \cite{silva2008formulation}, and the attitude was parametrized using the modified Rodrigues parameters. The coupled system was reduced to a linear time-invariant (LTI) system with the attitude vector as input, pre-multiplied by a sensitivity matrix that accounts for the deputy ballistic coefficient's sensitivity to arbitrary attitude variations. The controller was designed using the LQR approach.

The reviewed literature reveals a clear gap: there is no explicit, non-linearized algorithm for managing the spacecraft's surface that interacts with the atmosphere-spacecraft relative velocity solely through attitude maneuvers, without using drag panels. We denote this surface as the effective surface (ES), where the adjective "effective" means the only surface capable of generating a net drag force per unit of mass to vary the orbital motion. In \cite{bocciIAC2024}, we 
obtained the required ES profile for a virtual leader and real follower formation flight control via LQR. 
 The adjective "virtual" indicates that the leader is not actually orbiting; the onboard computer instantaneously simulates its trajectory. This assumption relates to the QBDebris mission \cite{ellingsencubesats,bocciIAC2024}, organized by UIT, the Arctic University of Norway, to which this work is closely related. 

\subsection{Objective and novelties}

The objective of this work is to design a reliable coupled orbit-attitude DD-based controller for CubeSat formation flight that, starting from any initial orbital conditions consistent with the HCW model assumptions and from arbitrary initial attitude conditions, asymptotically drives the inter-satellite distance to a prescribed value. Three main novelties are present in the paper. 

\begin{enumerate}
    \item The definition of the ES and its dynamics using a different mathematical formalism that provides a better physical understanding and a straightforward setup for proving the asymptotic stability via Lyapunov theory.
    \item The development of a controller that directly acts on angular velocity. This approach differs from traditional reference-attitude tracking methods. These may be computationally expensive because they require precomputing attitude and angular velocity profiles by solving nonlinear equations. In contrast, our algorithm evaluates the optimal angular velocity to achieve the desired ES, with the attitude following suit.
    \item We successfully showed that the angular velocity approach may enable the simultaneous meeting of some, although coarse, additional attitude constraints beyond just the leader-follower relative positioning.
\end{enumerate}

\section{Preliminaries} \label{preliminaries}
This section introduces the reference frames and the fundamental equations used in this work. Referring to \reffig{Frames}, we consider three right-handed Cartesian reference frames: the Earth-Centered Inertial (ECI) frame $\mathcal{F}_i$, the Hill frame $\mathcal{F}_h$, and the spacecraft body frame $\mathcal{F}_b$, where the superscripts $i$, $h$, and $b$ denote the corresponding frames. $\mathcal{F}_i$ is a non-rotating frame centered on the Earth. The $x^i$-axis is in the vernal equinox direction, $x^i-y^i$ is the equatorial plane, and the $z^i$-axis coincides with the Earth's rotation axis pointing northward. Consistently with the convention adopted in \cite{riano2020differential,maestrini2023relative}, the frame $\mathcal{F}_h$ is defined with origin at the follower, with the $(x^h,y^h,z^h)$ axes aligned with  
\begin{align}
    \bm{e}_r=\bm{r}_f^i\norm{\bm{r}_f^i}^{-1}, \quad \bm{e}_h=\Smat(\bm{r}_f^i)\bm{v}_f^i\norm{\Smat(\bm{r}_f^i)\bm{v}_f^i}^{-1}, \quad \bm{e}_\theta=\Smat(\bm{e}_h)\bm{e}_r
\end{align}

\noindent
where $\bm{r}_f^i$ is the follower position vector, $\bm{v}_f^i$ its velocity vector, and,  given any two generic vectors $\bm{x}$ and $\bm{y}$, the skew-symmetric cross-product matrix $\Smat$ is such that $\bm{x}\times\bm{y}=\Smat(\bm{x})\bm{y}$. We define $\Smat\belong$ and $\matrice{\Theta}\belong{4}{4}$ as
\withtablespacing{0.55}{
\begin{align}
    \Smat (\bm{x})=\left[\begin{array}{ccc}
        0 & -x_3 &x_2  \\
         x_3&0&-x_1\\
         -x_2&x_1&0
    \end{array}\right], \quad   \bm{\Theta}(\bm{x})=\left[\begin{array}{cc}
        0 & -\bm{x}^\top  \\
         \bm{x}& -\bm{\mathcal{S}}(\bm{x}) 
    \end{array}\right]
\end{align}
}
\noindent
$\mathcal{F}_h$ rotates relatively to $\mathcal{F}_i$ with angular velocity $\bm{\nu}_f^i=\Smat(\bm{r}_f^i)\bm{v}_f^i/\norm{\bm{r}_f^i}^2$ and the rotation matrix $\matrice{R}_i^h=\left[\bm{e}_r,\bm{e}_{\theta},\bm{e}_h\right]^\top\in \text{SO}(3)$ is such that  $\bm{x}^h=\matrice{R}_i^h\bm{x}^i$.  Finally, $\mathcal{F}_b$ is centered on the follower's center of mass, and the axes coincide with the principal inertia axes. $\mathcal{F}_b$ rotates relatively to $\mathcal{F}_i$ with angular velocity $\bm{\omega}^b$ and the rotation matrix  $\matrice{R}_i^b\in \text{SO}(3)$ is such that
$\bm{x}^b =\matrice{R}_i^b\bm{x}^i$. For a spacecraft orbiting in Low Earth Orbit (LEO), the two-body problem with atmospheric drag  is  
\begin{align}
    \ddot{\bm{r}}^i=-\mu\norm{\bm{r}^i}^{-3}\bm{r}^i-\beta S \norm{\bm{w}^i}\bm{w}^i+\bm{a}_P^i
    \label{2body}
\end{align}

\ReferenceFrameFig

\noindent
with $\beta$ and the atmosphere-spacecraft relative velocity vector $\bm{w}^i$ defined as
\begin{align}
    \beta(M,\rho,C_D)=\rho C_D/(2M), \quad \bm{w}^i=\bm{v^i}-\bm{\mathcal{S}} (\bm{\omega}_{\bm{\oplus}}^i)\bm{r}^i
    \label{betacoeffient}
\end{align}

\noindent
where $\mu=398600$ km$^3$/s$^2$ is the Earth's gravitational constant, $S$ is the ES, $C_D$ is the drag coefficient, $M$ is the mass,  $\rho$ is the atmospheric density, $\bm{\omega}_{\bm{\oplus}}^i=[0,0,7.292115486 \cdot 10^{-5}]^\top$ rad/s is the Earth's  angular velocity, and $\bm{a}_p^i$ is a vector that includes additional perturbing accelerations. Let $\matrice{I}_{n\times n}$ be the $n\times n$ identity matrix, we define $\bm{\xi}^i$ and $\dot{\bm{\xi}}^i$ as 
\begin{align}
            \bm{\xi}^i=\bm{w}^i/\norm{\bm{w}^i}, \quad\dot{\bm{\xi}}^i&=\norm{\bm{w}^i}^{-1}[\bm{\xi}^i(\bm{\xi}^i)^\top-\matrice{I}_{3\times 3}]\left[\Smat (\bm{\omega}^i_{\bm{\oplus}})\bm{v}^i+\mu\norm{\bm{r}^i}^{-3}\bm{r}^i
        \right]
        \label{csidotfor}
    \end{align}

\noindent

\section{Effective Surface Definition}
 The ES is the spacecraft's cross-sectional area projected along $\bm{\xi}^i$, and its definition relates to the drag force, \ie, a consequence of the atmosphere's molecules-surface interaction \cite{canuto2018spacecraft}. For non-trivial shapes, the ES can be computed using advanced techniques, \eg, convex polygons \cite{ben2015analytical}. However, a major drawback of these sophisticated algorithms is the difficulty of explicitly expressing ES dynamics in terms of attitude parameters, a key step in controller design.  

The ES can be parametrized in several ways, \eg, as a function of the incidence angle \cite{du1996using} or through auxiliary angles \cite{ivanov2018study}. For a cuboid with $N$ flat faces, the traditional definition of the ES is \cite{markley2014correction}
\begin{align}
    S =\sum_{j=1}^N (\bm{\eta}^{b})^\top \bm{n}_j^b \hat{S}_j, \quad \bm{\eta}^b=\matrice{R}_i^b\bm{\xi}^i
    \label{Seff}
\end{align}

\noindent
where $\bm{n}_j^b$ is the unit vector normal to the jth surface $\hat{S}_j$. For a CubeSat, $N=6$ and the usual setup is shown in \reftable{tabnormal}. If $(\bm{\eta}^{b})^\top \bm{n}_j^b<0$, $\hat{S}_j$ is not exposed to the flow, and the jth term must be omitted in the sum. Thus, we can rearrange \eqref{Seff} as
\begin{align}
    S =\frac{1}{2}\sum_{j=1}^6 c_j \hat{S}_j=\frac{1}{2}\sum_{j=1}^3 c_j \hat{S}_j+\frac{1}{2}\sum_{j=4}^6 c_j \hat{S}_j=\frac{1}{2}\mathfrak{S}_1+\frac{1}{2}\mathfrak{S}_2, \quad c_j=(\bm{\eta}^{b})^\top \bm{n}_j^b [1+\sgn((\bm{\eta}^{b})^\top \bm{n}_j^b )]
    \label{Seff2}
\end{align}

\noindent
with $\sgn(\bm{\eta}^b)=[\sgn(\eta^b_1),\sgn(\eta^b_2),\sgn(\eta^b_3)]^\top$, $\sgn(\eta^b_j)=2H(\eta^b_j)-1$, and $H(\cdot)$ the  Heaviside step function. Moreover, $\mathfrak{S}_1=\bm{s}^\top \bm{\eta}^b+\bm{m}^\top \bm{\sigma}  \bm{\eta}^b$ and $\mathfrak{S}_2=-\bm{s}^\top \bm{\eta}^b+\bm{m}^\top \bm{\sigma}  \bm{\eta}^b$ with the surface vector $\bm{s}$, the surface tensor $\matrice{\sigma}$, and $\bm{m}$ defined as
\begin{align}
\bm{s}=[\hat{S}_1,\hat{S}_2,\hat{S}_3]^\top, \quad
 \bm{\sigma} =\diag(\bm{s}), \quad   \bm{m}=\sgn( \bm{\eta}^b)
\end{align}

\noindent
Hence, inserting the values of $\mathfrak{S}_1$ and $\mathfrak{S}_2$ into \eqref{Seff2}, the final expression of the ES becomes
\begin{align}
    S =\bm{m}^\top \bm{\sigma}   \bm{\eta}^b 
    \label{SeffFinale}
\end{align}

Starting from the $\bm{\eta}^b$-definition in \eqref{Seff}, $\dot{\bm{\eta}}^b=\Smat (\bm{\eta}^b) \bm{\omega}^b+ \matrice{R}_i^b\dot{\bm{\xi}^i}$, where we used the rotation matrix kinematic equation 
\begin{align}
    \dot{\matrice{R}}_i^b=-\Smat(\bm{\omega}^b)\matrice{R}_i^b
    \label{kinematicequation}
\end{align}

\noindent
Since $\mathrm{d}H(\eta_j^b)/\mathrm{d}\eta_j^b=\delta(\eta_j^b)$ where $\delta(\eta_j^b)\colon \eta_j^b\delta(\eta_j^b)=0$ is the Dirac Delta function \cite{burrows1990fourier}, the time derivative of \eqref{SeffFinale} is
 \begin{align}
    \dot{S} =[2(\bm{\eta}^b)^\top \diag(\delta(\bm{\eta}^b))+\bm{m}^\top ]\bm{\sigma}  [\Smat (\bm{\eta}^b) \bm{\omega}^b+ \matrice{R}_i^b\dot{\bm{\xi}^i}]=\bm{m}^\top \bm{\sigma}  [\Smat (\bm{\eta}^b) \bm{\omega}^b+ \matrice{R}_i^b\dot{\bm{\xi}^i}]
        \label{Seffdot}
\end{align}

At most, three surfaces can simultaneously face $\bm{\eta}^b$, resulting in the eight configurations listed in \reftable{tabsurfpos}. Equation \eqref{SeffFinale} is equivalent to $S=\norm{\matrice{\sigma}\bm{\eta}^b}_1$. Thus, the ES is a continuous function but non-differentiable whenever any $\eta_j^b=0$, as illustrated in \reffig{configtrans}. Consequently, $\dot{S}$ exhibits discontinuities at configuration transition points, which preclude the application of Lyapunov theory for controller design and stability proofs. This fact leaves two possibilities:
\begin{enumerate}
    \item assuming that ES control is performed while remaining within the same configuration, where $S$ is a $\mathcal{C}^2$ function
    \item considering the problem as a hybrid system and using the related theory (\eg, switching theory \cite{schlanbusch2012hybrid, schlanbusch2014hybrid}) to prove asymptotic stability in the case of configuration transitioning. 
\end{enumerate}
 In this work, we choose the first option and, thus, we state the following configuration-belonging requirement:

\begin{figure}[hbt!]
\centering
\begin{minipage}[l]{0.576\textwidth}
  \centering
  \includegraphics[width=\textwidth, trim=0.1cm 0.2cm 0.7cm 0.2cm, clip]{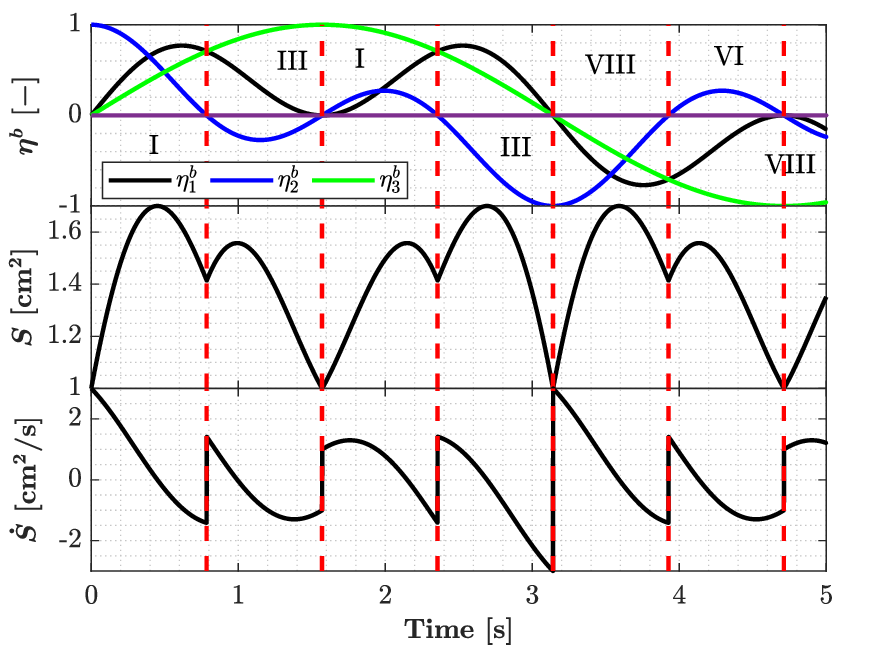}
  \caption{\centering Configuration transitions.}
  \label{configtrans}
\end{minipage}\hfill
\begin{minipage}[r]{0.40\textwidth}
    \centering
\withtablespacing{\spaziatura}{
\captionof{table}{CubeSats surfaces and normals}
  \label{tabnormal}

\begin{tabular}{lll}
\toprule
Face $j$& Surface $j$ &Unit vector $\bm{n}_j^b$ \\     
\midrule
 1 & $S_1$&$[1,0,0]^\top $ \\ 
         2 & $S_2$&$[0,1,0]^\top $ \\ 
         3 & $S_3$&$[0,0,1]^\top $ \\ 
         4 & $S_4=S_1$&$[-1,0,0]^\top $ \\ 
         5 & $S_5=S_2$&$[0,-1,0]^\top $ \\
         6 & $S_6=S_3$&$[0,0,-1]^\top $ \\
\bottomrule
\end{tabular}
}

 \vspace{0.5cm} 

\withtablespacing{\spaziatura}{
  \captionof{table}{ Possible configurations}
  \label{tabsurfpos}

\begin{tabular}{lll}
\toprule

Config.& Surfaces &$\bm{m}=\sgn(\bm{\eta}^b)$ \\
\midrule
      I &  1, 2, 3 & $[1,1,1]^\top$ \\
II& 1, 2, 6 & $[1,1,-1]^\top$\\
 III&1, 5,  3 & $[1,-1,1]^\top$ \\
 IV&1, 5,  6  & $[1,-1,-1]^\top$ \\
 V&4, 2, 3 & $[-1,1,1]^\top$ \\
 VI&4, 2,  6 & $[-1,1,-1]^\top$\\
VII &4, 5, 3  & $[-1,-1,1]^\top$\\
 VIII&4, 5,  6 & $[-1,-1,-1]^\top$\\
\bottomrule
\end{tabular}
}

\end{minipage}
\end{figure}

\begin{requirement} \label{RequirementI}
If the spacecraft at $t=\hat{t}\geq 0$ is in a  configuration that corresponds to $\hat{\bm{m}}=\sgn(\bm{\eta}^b)|_{t=\hat{t}}$, then, $\forall t\geq \hat{t}$, the rotation profile $\matrice{R}_i^b(\bm{\omega}^b)$ must be  such that $\bm{m}=\hat{\bm{m}} \, \forall t\geq \hat{t}$, which implies 
\begin{equation}
\diag(\hat{\bm{m}})\bm{\eta}^b=\diag(\hat{\bm{m}})\matrice{R}_i^b(\bm{\omega}^b)\bm{\xi}^i\geq 0, \quad \forall t\geq \hat{t}\geq 0    
\label{conficonstr}
\end{equation}

\noindent
where the notation $\matrice{R}_i^b(\bm{\omega}^b)$ means that $\matrice{R}_i^b$ is a function of $\bm{\omega}^b$ and not a multiplication.
\end{requirement}

If \refrequirement{RequirementI} is satisfied, the ES can be rewritten as $S=\bm{s}_m^\top \bm{\eta}^b$, where we denote $\bm{s}_m=\matrice{\sigma} \hat{\bm{m}}$ as the signed surface vector. Then, the ES dynamics \eqref{Seffdot} reduces to
\begin{align}
    \dot{S}&=(\bm{\psi}^{b})^{\top}\bm{\omega}^b+\phi, \quad \phi=\bm{s}_m^\top   \matrice{R}_i^b \dot{\bm{\xi}}^i,\quad \bm{\psi}^{b}=- \Smat (\bm{\eta}^b)\bm{s}_m
    \label{Sdotsop}
\end{align}

 To the best of our knowledge, \eqref{Sdotsop} offers a novel ES dynamics formulation, showing that $\dot{S}$ varies linearly with $\bm{\omega}^b$, scaled by the time-varying $\bm{\psi}^{b}$ and affected by a disturbance $\phi$ linked to the orbital state via $\dot{\bm{\xi}}^i$ in \eqref{csidotfor}.
Moreover, note that $\min(\bm{s}) \leq S\leq \norm{\bm{s}}\cdot\norm{\bm{\eta}^b}=\norm{\bm{s}}$ and  $S=\norm{\bm{s}}$ if $\bm{\eta}^b$ is perfectly aligned with the diagonal towards one corner of the CubeSat and equal to the value $\hat{\bm{\eta}}^b=\bm{s}_m/\norm{\bm{s}}$. This remark leads to the following admissibility requirement:

\begin{requirement} \label{reqSd}
    Any ES profile must be admissible, i.e., $S\in\Gamma_s=[\min(\bm{s}),\norm{\bm{s}}]\subset\R^+\, \forall t\geq 0$ with $\R^+=\{x\in\R \colon x\geq0\}$.
\end{requirement}

\section{Orbital Dynamics}

Few works analytically investigate orbital dynamics with atmospheric drag \cite{scheifele1977singularity,mittleman1982analytic,carter2002clohessy,xu2011analytical,cao2015linearized}. To describe the leader (subscript $l$) and follower (subscript $f$) relative motion, we employ the HCW plus drag formulation by Silva \cite{silva2008formulation}. Let $\delta\bm{r}^i=\bm{r}_l^i-\bm{r}_f^i$ and $\delta\bm{r}^h=\matrice{R}_i^h \delta\bm{r}^i=[x^h,y^h,z^h]^\top$ its projection on $\mathcal{F}_h$, the nonlinear relative dynamics can be approximated as 
\sottosistema{\ddot{x}^h&=-\beta_lS_lw_f\norm{\bm{\omega}_\oplus^i} \cos(\mathscr{i}_f)y^h+3\nu_f^2x^h+2\nu_f\dot{y}^h-\beta_lS_lw_f\dot{x}^h
\label{xhill}\\
   \ddot{y}^h&=-2\beta_lS_lw_f\norm{\bm{\omega}_\oplus^i} \cos(\mathscr{i}_f) x^h-2\beta_lS_lw_f \dot{y}^h-2\nu_f\dot{x}^h-(\beta_lS_l-\beta_fS_f)w_f^2
\label{yhill}\\
    \ddot{z}^h&=-\nu_f^2z^h-\beta_lS_lw_f \dot{z}^h
\label{zhilleq}}{\label{sistemahill}}

\noindent
where $\nu_f=(\mu/\norm{\bm{r}_f^i}^3)^{1/2}$, $w_f=\norm{\bm{w}_f^i}$, and $\mathscr{i}_f\in [0,\pi]$ is the follower's orbit inclination. 

From an astrodynamics perspective, the control inputs are \( S_l \) and \( S_f \). Because from \eqref{betacoeffient} $\beta=\beta(M,\rho, C_D)$ and \( S_l \) appears as a state multiplier, \eqref{sistemahill} is not an LTI system. The model can be further refined by incorporating additional orbital perturbations (\eg, the $J_2$ effect \cite{schweighart2002high}), and the OC can be achieved for a generic $\beta(t)$ with uncertainties using adaptive control  \cite{riano2021adaptive} or desensitized approaches \cite{harris2019desensitized}. However, since designing a precise OC is not the primary focus of this work, we will rely on the following simplifying assumptions, similar to those taken by Harris et al. \cite{harris2020linear}.

\begin{assumption} \label{Assumption3}
 $\beta=\beta_l\approx\beta_f\approx \bar{\beta} =const.$ within the operating altitude range.    
\end{assumption}

\begin{assumption} \label{Assumption4}
The leader is a virtual CubeSat with a constant ES, \ie, \( S_l(t) = \bar{S}_l \).
\end{assumption}

\refassumption{Assumption3} is motivated by three reasons. 
First, CubeSats in formation flight are typically identical ($M_l = M_f$). Second, a short control duration does not cause significant orbital decay to produce huge $\rho$-variations. Furthermore, refined atmospheric models like NRLMSISE-00, which well capture density variations, show that $\rho$ exhibits oscillatory behavior without abrupt variations over long-term propagations \cite{CALABIA2020105207}. Thus, in this context, the OC preliminary design can assume $\rho=\bar{\rho}$, where $\bar{\rho}$ denotes the mean atmospheric density computed over a representative long-term propagation. Finally, although $C_D$ depends on several parameters (\eg, the rarefaction regime, surface accommodation, and temperature ratios \cite{sun2017roto}), for CubeSats $C_D\in[2,2.4]$ \cite{moe2005gas} and its impact on the drag acceleration is much smaller than the uncertainty on $\rho$. Thus, we adopt the practical initial approximation $C_D\approx \bar{C}_D=2.2$.

Furthermore, we do not include the $J_2$ perturbation. The latter introduces long-term drift and short-period oscillations on an orbital timescale. For short control durations, secular effects are negligible, while short-period effects cause small oscillations around the inter-satellite distance setpoint.

Since $z^h=0$ is an asymptotically stable equilibrium point of  \eqref{zhilleq}, the equation can be removed. Let $\bar{\beta}=\beta(M,\bar{\rho},\bar{C}_D)$, $\lambda_\mathscr{i}=\bar{\beta}\bar{S}_lw_f\norm{\bm{\omega}_\oplus^i} \cos(\mathscr{i}_f)$, and define the reduced state $\bm{x}^h=[x^h,y^h,\dot{x}^h,\dot{y}^h]^\top$; the final LTI system is
\begin{align}
    \dot{\bm{x}}^h=\matrice{A}\bm{x}^h+ \bm{b}^h\Delta S
    \label{obitequationfin}
\end{align}

\noindent
where $\Delta S=\bar{S}_l-S_f$, and  $\matrice{A}$  and  $\bm{b}^h$ are defined as
\withtablespacing{0.7}{
\begin{align}
    \matrice{A}=\left[\begin{array}{cccc}
        0&0 & 1&0  \\
        0&0&0&1\\
         3\nu_f^2 &-\lambda_\mathscr{i}&-\bar{\beta}\bar{S}_lw_f  &2\nu_f \\
        -2 \lambda_\mathscr{i} & 0& -2 \nu_f  & -2 \bar{\beta}\bar{S}_lw_f
    \end{array}\right], \quad \bm{b}^h=-\bar{\beta} w_f^2\left[\begin{array}{c}
         0  \\
         0\\
         0\\
         1
    \end{array}\right]
\end{align}
}

\section{Control Problem Statement}
Let $\bm{x}^h_d=[x_d^h,y_d^h,0,0]^\top$ be the reference state, $\bm{x}_e^h=\bm{x}^h-\bm{x}_d^h$, and $\dot{\bm{x}}_e^h=\matrice{A}\bm{x}_e^h+\matrice{A}\bm{x}_d^h+\bm{b}^h\Delta S$. Since this is an LTI system, we design $\Delta S_d$ to stabilize it as the sum of a constant term $\overline{\Delta S}\colon \matrice{A}\bm{x}_d^h+\bm{b}^h\overline{\Delta S}=0$ and a state feedback. Moreover, since $\matrice{A}\bm{x}_d^h=[0,0,3 x_d^h\nu_f^2-y_d^h \lambda_\mathscr{i},-2 x_d^h\lambda_\mathscr{i}]^\top$ and $\bm{b}^h=[0,0,0,-\bar{\beta} w_f^2]^\top$, no $\overline{\Delta S}$ nullifies the sum. Thus, we must find an accuracy trade-off. The modified HCW equations \eqref{sistemahill} show that the drag acceleration appears as a pure forcing term in the $y$-equation only. Thus, to maximize the control authority and minimize the time, the $\bm{e}_\theta$-separation is preferable. To find such $\overline{\Delta S}$, we evaluate the steady-state solution $\bm{x}_{e,ss}^h$, \ie
\begin{align}
    \bm{x}_{e,ss}^h=-\bm{x}_d^h-\matrice{A}^{-1}\bm{b}^h\overline{\Delta S} =-\left[x_d^h+\bar{\beta}  w_f^2\overline{\Delta S}/(2 \lambda_\mathscr{i}),y_d^h+3 \bar{\beta}  \nu _f^2 w_f^2\overline{\Delta S}/(2 \lambda_\mathscr{i}^2),0,0\right]^\top
\end{align}

\noindent
and, thus, zero $\bm{e}_\theta$ steady-state error implies
\begin{align}
    \overline{\Delta S}=-2  \lambda_\mathscr{i}\epsilon_r/( \bar{\beta}   w_f^2) \rightarrow \bm{x}_{e,ss}^h=\left[\epsilon_r-x_d^h,0,0,0\right]^\top, \quad \epsilon_r=y_d^h \lambda_\mathscr{i}/(3 \nu _f^2)
\label{SDandxess1}
\end{align}

The magnitude of $\lambda_\mathscr{i}$ is tiny (often neglected \cite{harris2019desensitized,harris2020linear}) and, since $y_d$ must be sufficiently small for the HCW assumptions, $\epsilon_r \approx 0$. Thus, in practice,  $x^h\rightarrow 0$ and $\overline{\Delta S}\approx 0$ $\forall \bm{x}_d^h$.     By defining $\bm{X}_e^h=\bm{x}_e^h-\bm{x}_{e,ss}^h$, the model equations are
\sottosistema{ \dot{\bm{X}}_e^h&=\matrice{A}\bm{X}_e^h+\matrice{A}(\bm{x}_d^h+\bm{x}_{e,ss}^h)+\bm{b}^h\Delta S
    \label{orbitalerrordynzeroo}\\
   \Delta\dot{S}&=-\dot{S}_f=-(\bm{\psi}^{b})^{\top}\bm{\omega}^b-\phi \label{DSdotNew}\\
    \matrice{J}\dot{\bm{\omega}}^b&=\Smat(\matrice{J}\bm{\omega}^b)\bm{\omega}^b+\bm{u}^b
    \label{Eulerclassic}}{\label{sistemada controllare}}

\noindent
where  $\dot{S}_f$ is given by
\eqref{Sdotsop}. Moreover, \eqref{Eulerclassic} are the  Euler equations, where $\matrice{J}$ is the symmetric and positive definite (SPD) inertia tensor, $\bm{\omega}^b$ is the follower angular velocity, and $\bm{u}^b$ is the control torque. To simplify the notation, we do not use the subscript $f$ for the attitude dynamics to indicate that the quantity refers to the follower. 

The control problem is then: find $\bm{u}^b \colon \{ \bm{\omega}^b\rightarrow \bm{\omega}^b_d, \, \Delta S\rightarrow \Delta S_d, \, \bm{X}_e^h\rightarrow 0  \}$ 
 for $t\rightarrow +\infty$, where $\bm{\omega}^b_d$ is a suitable angular velocity profile that meets \refrequirement{RequirementI} and $\Delta S_d=\bar{S}_l-S_{f,d}$ with $S_{f,d}$ an admissible  follower ES profile as per \refrequirement{reqSd}.

\section{Controller design}
In this section, we design an angular-velocity-based controller for ES and inter-satellite distance regulation. Unlike traditional approaches, no predefined reference profiles are required. Instead, the controller is designed using integrator backstepping, a recursive technique for stabilizing high-order nonlinear systems. The method proceeds from the inner dynamics outward, treating each subsequent state variable as a virtual control to stabilize the preceding one, while compensating the induced error at the next design step. Therefore, our main proposition is:

\begin{proposition} \label{MainTheorem}
Assume $\bm{m}=\hat{\bm{m}} \, \forall t \geq \hat{t}$ as per \refrequirement{RequirementI}. Moreover, let $\bm{g}\in\R^4$ be the LQR gain that stabilizes the plant of \eqref{orbitalerrordynzeroo} and $\matrice{P}$ be the solution of the associated algebraic Riccati equation (ARE). Finally, let $\Delta S_d=-\bm{g}^\top\bm{X}_e^h +\overline{\Delta S}$, $\Delta S_e=\Delta S-\Delta S_d$, and $\bm{\omega}_d^b$ be the \refrequirement{RequirementI}-compliant solution of the consistent-underdetermined  algebraic linear system 
\begin{align}
    \Lambda-(\bm{\psi}^{b})^{\top}\bm{\omega}^b_d+k_s \Delta S_e=0
    \label{mainequationequalitycons}
\end{align}

\noindent
with $k_s>0$ a feedback gain, $\Lambda=-\phi+[(\bm{b}^{h})^\top\matrice{P}+\bm{g}^\top\matrice{A}_c]\bm{X}_e^h+\bm{g}^\top\bm{b}^h\Delta S_e$, and $\matrice{A}_c=\matrice{A}-\bm{b}^h\bm{g}^\top$. Then, defining $\bm{\omega}_e^b=\bm{\omega}^b-\bm{\omega}_d^b$,  the equilibrium point $(\bm{X}_e^h,\Delta S_e,\bm{\omega}_e^b)=(\bm{0},0,\bm{0})$ of the error dynamics $\{\dot{\bm{X}}_e^h,\Delta \dot{S}_e,\dot{\bm{\omega}}_e^b\}$ is asymptotically stable if the rotational dynamics \eqref{Eulerclassic} is in closed-loop with the  control law
\begin{align}
    \bm{u}^b&=-\Smat(\matrice{J}\bm{\omega}^b)\bm{\omega}^b+\matrice{J}\dot{\bm{\omega}}^b_d+\Delta S_e\bm{\psi}^{b}-\matrice{K}_\omega\bm{\omega}_e^b
    \label{ufinal}
\end{align}
with $\matrice{K}_\omega \belong{3}{3}$ an SPD gain matrix.
\end{proposition}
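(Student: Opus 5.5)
The plan is a standard backstepping/Lyapunov argument on the cascade $\bm{X}_e^h \to \Delta S_e \to \bm{\omega}_e^b$. I would take the composite candidate
\begin{align*}
V=\tfrac{1}{2}(\bm{X}_e^h)^\top\matrice{P}\bm{X}_e^h+\tfrac{1}{2}\Delta S_e^2+\tfrac{1}{2}(\bm{\omega}_e^b)^\top\matrice{J}\bm{\omega}_e^b .
\end{align*}
It is positive definite and radially unbounded because $\matrice{P}$ and $\matrice{J}$ are SPD, and it is time-invariant in the error variables. The proof then reduces to writing each error dynamics in closed loop and showing that the cross terms cancel in $\dot V$.

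\textbf{Step 1 (orbital error dynamics).} Substitute $\Delta S=\Delta S_d+\Delta S_e=-\bm{g}^\top\bm{X}_e^h+\overline{\Delta S}+\Delta S_e$ into \eqref{orbitalerrordynzeroo}. From the definition of $\bm{x}_{e,ss}^h$ we have $\bm{x}_{e,ss}^h=-\bm{x}_d^h-\matrice{A}^{-1}\bm{b}^h\overline{\Delta S}$. Hence $\matrice{A}(\bm{x}_d^h+\bm{x}_{e,ss}^h)+\bm{b}^h\overline{\Delta S}=\bm{0}$, which gives
\begin{align*}
\dot{\bm{X}}_e^h=\matrice{A}_c\bm{X}_e^h+\bm{b}^h\Delta S_e .
\end{align*}
The ARE with weights $\matrice{Q}$ and $\mathfrak{R}$ gives $\matrice{A}_c^\top\matrice{P}+\matrice{P}\matrice{A}_c=-(\matrice{Q}+\mathfrak{R}\bm{g}\bm{g}^\top)$, which is negative definite when $\matrice{Q}$ is SPD. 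So the first term of $\dot V$ equals $-\tfrac12(\bm{X}_e^h)^\top(\matrice{Q}+\mathfrak{R}\bm{g}\bm{g}^\top)\bm{X}_e^h+\Delta S_e(\bm{b}^h)^\top\matrice{P}\bm{X}_e^h$.

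\textbf{Step 2 (surface error).} Use \eqref{DSdotNew} and $\Delta\dot S_d=-\bm{g}^\top\dot{\bm{X}}_e^h$ (with $\overline{\Delta S}$ constant) to write
\begin{align*}
\Delta\dot S_e=-(\bm{\psi}^b)^\top(\bm{\omega}_d^b+\bm{\omega}_e^b)-\phi+\bm{g}^\top(\matrice{A}_c\bm{X}_e^h+\bm{b}^h\Delta S_e).
\end{align*}
Replacing $(\bm{\psi}^b)^\top\bm{\omega}_d^b=\Lambda+k_s\Delta S_e$ from \eqref{mainequationequalitycons} and using the definition of $\Lambda$, the terms in $\phi$, $\bm{g}^\top\matrice{A}_c$ and $\bm{g}^\top\bm{b}^h$ cancel. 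What remains is
\begin{align*}
\Delta\dot S_e=-(\bm{b}^h)^\top\matrice{P}\bm{X}_e^h-k_s\Delta S_e-(\bm{\psi}^b)^\top\bm{\omega}_e^b .
\end{align*}
The first term cancels the cross term from Step 1.

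\textbf{Step 3 (angular velocity error).} From \eqref{Eulerclassic} with \eqref{ufinal},
\begin{align*}
\matrice{J}\dot{\bm{\omega}}_e^b=\matrice{J}\dot{\bm{\omega}}^b-\matrice{J}\dot{\bm{\omega}}_d^b=\Delta S_e\bm{\psi}^b-\matrice{K}_\omega\bm{\omega}_e^b .
\end{align*}
The term $\Delta S_e(\bm{\psi}^b)^\top\bm{\omega}_e^b$ therefore cancels the last term of Step 2.

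Summing the three contributions gives
\begin{align*}
\dot V=-\tfrac12(\bm{X}_e^h)^\top(\matrice{Q}+\mathfrak{R}\bm{g}\bm{g}^\top)\bm{X}_e^h-k_s\Delta S_e^2-(\bm{\omega}_e^b)^\top\matrice{K}_\omega\bm{\omega}_e^b .
\end{align*}
This is negative definite in the full error state and uniformly in $t$. Since $V$ is quadratic and bounded above and below by time-independent quadratic forms, Lyapunov's theorem for nonautonomous systems gives uniform asymptotic stability, and in fact exponential stability, of the origin.

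\textbf{Main obstacle.} The delicate part is not the algebra but the well-posedness of $\bm{\omega}_d^b$ and of the term $\matrice{J}\dot{\bm{\omega}}_d^b$ in \eqref{ufinal}.

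First, \eqref{mainequationequalitycons} is consistent only if $\bm{\psi}^b=-\Smat(\bm{\eta}^b)\bm{s}_m\neq\bm{0}$. This fails when $\bm{\eta}^b$ is parallel to $\bm{s}_m$, i.e. at $S=\norm{\bm{s}}$. I would handle this by invoking the consistency hypothesis of the statement and, if needed, noting that the minimum-norm choice $\bm{\omega}_d^b=\bm{\psi}^b(\Lambda+k_s\Delta S_e)/\norm{\bm{\psi}^b}^2$ is smooth wherever $\bm{\psi}^b$ is bounded away from zero. Any other \refrequirement{RequirementI}-compliant solution that is differentiable in time works equally well.

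Second, the assumption $\bm{m}=\hat{\bm{m}}$ is what makes $S$ a $\mathcal{C}^2$ function, via \eqref{Sdotsop}. Together with the smoothness of $\dot{\bm{\xi}}^i$ from \eqref{csidotfor}, it ensures that $\phi$, $\bm{\psi}^b$ and hence $\dot{\bm{\omega}}_d^b$ exist along solutions. This is the point where the requirement is genuinely used.
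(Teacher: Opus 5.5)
Your proposal is correct and is essentially the paper's proof. It uses the same composite function $V_c$ and the same closed-loop error dynamics $\dot{\bm{X}}_e^h=\matrice{A}_c\bm{X}_e^h+\bm{b}^h\Delta S_e$, $\Delta\dot S_e=-(\bm{b}^h)^\top\matrice{P}\bm{X}_e^h-k_s\Delta S_e-(\bm{\psi}^b)^\top\bm{\omega}_e^b$, $\matrice{J}\dot{\bm{\omega}}_e^b=\Delta S_e\bm{\psi}^b-\matrice{K}_\omega\bm{\omega}_e^b$ with the same cross-term cancellations, assembled at once rather than through the paper's sequence $V_1$, $V_2$, $V_c$; your extra remarks on uniform/exponential stability despite the time-varying $\bm{\psi}^b$, and on needing $\bm{\psi}^b\neq\bm{0}$ and a differentiable $\bm{\omega}_d^b$, are sound refinements of points the paper leaves implicit.
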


\begin{proof}
The proof follows the classical steps of the integrator backstepping technique --\textit{cf.} Khalil \cite{khalil2002control}.
\paragraph{Step 1: Design an ideal $\Delta S_d$ profile to stabilize \eqref{orbitalerrordynzeroo} and proof of asymptotic stability of the eq. point $\bm{X}_e^h=\bm{0}$}
\ \\
Let $\matrice{P}\belong{4}{4}$ be an SPD matrix and $V_1\colon \R^4 \to \R$ be a continuously differentiable Lyapunov function such that $V_1(\bm{0})=0$ and $V_1(\bm{X}_e^h)>0$ in $\R^4-\{\bm{0}\}$ defined as $V_1(\bm{X}_e^h)=(\bm{X}_e^{h})^{\top} \matrice{P} \bm{X}_e^h/2$. By taking its time derivative and using \eqref{orbitalerrordynzeroo}, we get
\begin{align}
\dot{V}_1&=\frac{1}{2}(\dot{\bm{X}}_e^{h})^{\top} \matrice{P} \bm{X}_e^h+\frac{1}{2}(\bm{X}_e^{h})^{\top} \matrice{P} \dot{\bm{X}}_e^h=\frac{1}{2}(\bm{X}_e^{h})^\top [\matrice{A}^\top \matrice{P}+\matrice{P}\matrice{A}] \bm{X}_e^h+(\bm{X}_e^{h})^{\top} \matrice{P}[\matrice{A}(\bm{x}_d^h+\bm{x}_{e,ss}^h)+\bm{b}^h\Delta S]
\end{align}

Since the pair $(\matrice{A},\bm{b}^{h})$ is controllable,  by choosing $\matrice{P}$ as the solution of the ARE $\matrice{A}^\top \matrice{P}+\matrice{P}\matrice{A}-\matrice{P}\bm{b}^h\mathfrak{R}^{-1}(\bm{b}^{h})^{\top} \matrice{P}+\matrice{Q}=\bm{0}$
and $\Delta S \equiv  \Delta S_d=-\mathfrak{R}^{-1}(\bm{b}^{h})^{\top}\matrice{P}\bm{X}_e^h+\overline{\Delta S}=-\bm{g}^\top\bm{X}_e^h +\overline{\Delta S}$ with $\overline{\Delta S}$ defined in \eqref{SDandxess1}, we end up with
\begin{align}
    \dot{V}_1(\bm{X}_e^h)=-\frac{1}{2} (\bm{X}_e^h)^\top[\matrice{P}\bm{b}^h\mathfrak{R}^{-1}(\bm{b}^{h})^{\top} \matrice{P}+\matrice{Q}]\bm{X}_e^h=-(\bm{X}_e^h)^\top\matrice{\mathfrak{Q}}\bm{X}_e^h
\end{align}

\noindent
where $\matrice{Q}$ denotes the state cost matrix and $\mathfrak{R}$ the control cost. By picking $\matrice{Q}$ SPD (\textit{e.g.}, with the Bryson rule \cite{bryson2018applied,okyere2019lqr}) and, since $\matrice{P}\bm{b}^h\mathfrak{R}^{-1}(\bm{b}^{h})^{\top} \matrice{P}$ is symmetric and positive semi-definite,  the sum $\matrice{\mathfrak{Q}}$ is SPD. Since $\dot{V}_1(\bm{X}_e^h)<0$, $V_1(\bm{0})=0$, and $V_1(\bm{X}_e^h)>0 \, \forall \bm{X}_e^h\neq \bm{0}$, the origin of \eqref{orbitalerrordynzeroo} is asymptotically stable.

\paragraph{Step 2: Design of  $\bm{\omega}_d^b$ profile to stabilize \eqref{DSdotNew}, and proof of asymptotic stability of the eq. point $(\bm{X}_e^h,\Delta S_e)=(\bm{0},0)$ }
\
\\
 In the first backstep, we set $\Delta S_e=\Delta S-\Delta S_d$. By solving  for $\Delta S$ and inserting it into \eqref{orbitalerrordynzeroo}, we get $\dot{\bm{X}}_e^h=\matrice{A}_c\bm{X}_e^h+  \bm{b}^h\Delta S_e$,
where $\matrice{A}_c=\matrice{A}-\bm{b}^h\bm{g}^\top$ is the Hurwitz orbital closed-loop system matrix. Using this result and \eqref{DSdotNew}, $\Delta \dot{S}_e$ becomes
\begin{align}
    \Delta \dot{S}_e=-(\bm{\psi}^{b})^{\top}\bm{\omega}^b-\phi+\bm{g}^\top\dot{\bm{X}}_e^h=-(\bm{\psi}^{b})^{\top}\bm{\omega}^b-\phi+\bm{g}^\top\matrice{A}_c\bm{X}_e^h+\bm{g}^\top\bm{b}^h\Delta S_e
    \label{DSdotintermediate}
\end{align}

\noindent
In \eqref{DSdotintermediate}, we sum quantities expressed in $\mathcal{F}_b$ and $\mathcal{F}_h$ because scalar products are rotation invariants.

 Let $V_2\colon \R^4 \times \R \to \R$ be a continuously differentiable Lyapunov function such that $V_2(\bm{0},0)=0$ and $V_2(\bm{X}_e^h,\Delta S_e)>0$ in $\R^4\times \R-\{\bm{0},0\}$ defined as $V_2(\bm{X}_e^h,\Delta S_e)=(\bm{X}_e^{h})^{\top}\matrice{P}\bm{X}_e^h/2+\Delta S_e^2/2 $. By taking the time derivative, we get
 \begin{align}
      \dot{V}_2=-(\bm{X}_e^{h})^{\top}\matrice{\mathfrak{Q}}\bm{X}_e^h +\Delta S_e[\Lambda-(\bm{\psi}^{b})^{\top}\bm{\omega}^b]
 \end{align}

 \noindent
with $\Lambda=-\phi+[(\bm{b}^{h})^{\top}\matrice{P}+\bm{g}^\top\matrice{A}_c]\bm{X}_e^h+\bm{g}^\top\bm{b}^h\Delta S_e$. Thus, by choosing $\bm{\omega}^b\equiv \bm{\omega}_d^b$ such that
\begin{align}
 \Lambda-(\bm{\psi}^{b})^{\top}\bm{\omega}^b_d=-k_s \Delta S_e
\end{align}

\noindent
with $k_s>0$, we end up with $\dot{V}_2(\bm{X}_e^h,\Delta S_e)=-(\bm{X}_e^{h})^{\top}\matrice{\mathfrak{Q}}\bm{X}_e^h-k_s\Delta S_e^2<0$. Since $\dot{V}_2(\bm{X}_e^h,\Delta S_e)<0$, $V_2(\bm{0},0)=0$, and $V_2(\bm{X}_e^h,\Delta S_e)>0\, \forall (\bm{X}_e^h,\Delta S_e)\neq (\bm{0},0)$, the origin of the system \eqref{orbitalerrordynzeroo} and \eqref{DSdotintermediate} is asymptotically stable.

\paragraph{Step 3:  Design of $\bm{u}^b$ and proof of asymptotic stability of the eq. point $(\bm{X}_e^h,\Delta S_e,\bm{\omega}_e^b)=(\bm{0},0,\bm{0})$} 
\
\\
The last back-step is otained by setting $\bm{\omega}_e^b=\bm{\omega}^b-\bm{\omega}_d^b$ and, taking its time derivative and using \eqref{Eulerclassic}, we get 
\begin{align}
\matrice{J}\dot{\bm{\omega}}_e^b=\Smat(\matrice{J}\bm{\omega}^b)\bm{\omega}^b-\matrice{J}\dot{\bm{\omega}}^b_d+\bm{u}^b
\label{omeerror}
\end{align}

Let $V_c\colon \R^4\times\R\times\R^3 \to \R$ be a continuously differentiable composite Lyapunov function such that $V_c(\bm{0},0,\bm{0})=0$ and $V_c(\bm{X}_e^h,\Delta S_e,\bm{\omega}_e^b)>0$ in $\R^4\times\R\times\R^3-\{\bm{0},0,\bm{0}\}$ defined as $ V_c(\bm{X}_e^h,\Delta S_e,\bm{\omega}_e^b)=(\bm{X}_e^{h})^{\top}\matrice{P}\bm{X}_e^h/2+\Delta S_e^2/2+(\bm{\omega}_e^{b})^{\top}\matrice{J}\bm{\omega}_e^b/2$. By taking the time derivative of such a  composite Lyapunov function, we get
\begin{align}
    \dot{V}_c =-(\bm{X}_e^{h})^{\top}\matrice{\mathfrak{Q}}\bm{X}_e^h-k_s\Delta S_e^2+(\bm{\omega}_e^{b})^{\top}\left[\Smat(\matrice{J}\bm{\omega}^b)\bm{\omega}^b-\matrice{J}\dot{\bm{\omega}}^b_d-\Delta S_e\bm{\psi}^{b}+\bm{u}^b\right]
\end{align}

\noindent
and, by defining the control input $\bm{u}
^b$ as
\begin{align}
    \bm{u}^b=-\Smat(\matrice{J}\bm{\omega}^b)\bm{\omega}^b+\matrice{J}\dot{\bm{\omega}}^b_d+\Delta S_e\bm{\psi}^{b}-\matrice{K}_\omega\bm{\omega}_e^b
\end{align}

\noindent
with $\matrice{K}_\omega$ an SPD gain matrix, we end up with $\dot{V}_c(\bm{X}_e^h,\Delta S_e,\bm{\omega}_e^b)=-(\bm{X}_e^{h})^{\top}\matrice{\mathfrak{Q}}\bm{X}_e^h-k_s\Delta S_e^2-(\bm{\omega}_e^{b})^{\top}\matrice{K}_\omega\bm{\omega}_e^b<0$. Since $\dot{V}_c(\bm{X}_e^h,\Delta S_e,\bm{\omega}_e^b)<0$, $V_c(\bm{0},0,\bm{0})=0$, and $V_c(\bm{X}_e^h,\Delta S_e,\bm{\omega}_e^b)>0\, \forall (\bm{X}_e^h,\Delta S_e,\bm{\omega}_e^b)\neq (\bm{0},0,\bm{0})$, the origin of the system \eqref{orbitalerrordynzeroo}, \eqref{DSdotintermediate}, and \eqref{omeerror} is asymptotically stable.

\end{proof}

\section{Minimization Problem}
The main challenge with the controller designed in \refproposition{MainTheorem} is solving the consistent, underdetermined algebraic linear system \eqref{mainequationequalitycons} to determine the \refrequirement{RequirementI}-compliant $\bm{\omega}_d^b$ profile. Specifically, \eqref{mainequationequalitycons} is one equation with three unknowns. Thus, by the Rouché-Capelli theorem \cite{shafarevich2012linear}, it has infinitely many solutions in a two-dimensional solution space.

For this reason, an effective resolution strategy is to set up a constrained minimization problem. To use this approach, a well-defined set of constraints is crucial. The criticality of the configuration constraints \eqref{conficonstr} is that, at each time step, $\matrice{R}_i^b(\bm{\omega}^b_d)$ is not available without an exact solution to \eqref{kinematicequation}. To tackle this issue, our control algorithm estimates $\matrice{R}_i^b(\bm{\omega}^b_d)$ with a precise approximation of the unit quaternion $\bm{q}(\bm{\omega}_d^b,t)=[q_1,q_2,q_3,q_4]^\top=[\bm{q}_{1:3}^\top,q_4]^\top$  \cite{wertz2012spacecraft}, \ie,
\begin{align}
   \bm{q}(t+\tau)&=\left[\cos\left(\norm{\bm{\omega}_d^b} \tau/2 \right)\matrice{I}_4+ \norm{\bm{\omega}_d^b}^{-1}\sin\left(\norm{\bm{\omega}_d^b} \tau/2\right)\matrice{\Theta}(\bm{\omega}_d^b)\right]\bm{q}(t)  \\
   \matrice{R}_i^b(\bm{q})&=(q_4^2-||\bm{q}_{1:3}||^2)\matrice{I}_3+2\bm{q}_{1:3}\bm{q}_{1:3}^\top-2q_4\Smat(\bm{q}_{1:3})
    \label{quat2R}
\end{align}
\noindent
where $\tau$ is the time step. In other words,
our algorithm replaces  \eqref{conficonstr} with $\diag(\hat{\bm{m}})\matrice{R}_i^b(\bm{q}(t+\tau))\bm{\xi}^i\geq 0$. Thus, the constrained minimization problem formalizes into 
\withtablespacing{\spaziatura}{
\begin{align}
    \begin{aligned}
      \bm{\omega}_d^b=  \min_{\bm{\omega}_d^b\in\R^3} &\quad \frac{1}{2} (\bm{\omega}_d^b)^\top \bm{\omega}_d^b\\
        \rm{s.t.}&     \quad   \bm{\Gamma}_\omega
    \end{aligned} , \quad \bm{\Gamma}_\omega=\left[\begin{array}{c}
         \Lambda-(\bm{\psi}^{b})^{\top}\bm{\omega}^b_d+k_s \Delta S_e=0 \\
         \diag(\hat{\bm{m}})\matrice{R}_i^b(\bm{\omega}^b_d)\bm{\xi}^i\geq 0 \\
         \rm{additional}\, \mathrm{constraints}
    \end{array}\right]
    \label{minprobbello}
\end{align}
}
\noindent
where $\bm{\Gamma}_\omega$ is the constraints set that includes \eqref{mainequationequalitycons} as a linear equality constraint and \eqref{conficonstr} as a nonlinear inequality.

\section{Simulations and Results}
This section assesses the performance of the control algorithm designed in \refproposition{MainTheorem} through several simulated cases. The nominal initial conditions and simulation parameters are detailed in \reftable{orbtab}, \reftable{HCWtab}, and \reftable{simpartab}.

\withtablespacing{\spaziatura}{
\begin{table}[hbt!]
\caption{ Orbital initial conditions}
\label{orbtab}
\centering
\begin{tabular}{lcccccccc}
\toprule

Spacecraft& $\mathscr{a}$ $[$km$]$ & $\mathscr{e}$ $[-]$ & $\mathscr{i}$ $[$deg$]$ & $\mathscr{\Omega}$ $[$deg$]$ & $\mathscr{\varpi}$ $[$deg$]$ & $\theta$ $[$deg$]$ & $T$ $[$h$]$ & $\rho$ $[$kg/m$^3]$\\
\midrule
Follower & $6821.000$ & $1\cdot10^{-6}$ & $90.0000000$& $0$ &$0$ &$0$ & $1.5573$ & $3.0137\cdot 10^{-12}$\\
Leader &   $6821.002$ & $2\cdot10^{-6}$ &$89.9999989$&$1\cdot10^{-6}$&$1.2\cdot10^{-6}$&$1\cdot10^{-6}$& $1.5573$&$3.0137\cdot 10^{-12}$\\
\bottomrule
\end{tabular}
\end{table}
}

\withtablespacing{\spaziatura}{
\begin{table}[hbt!]
\caption{ HCW, attitude, and ES initial conditions}
\label{HCWtab}
\centering
\begin{tabular}{lll}
\hline
\toprule
Quantity& Value &Units \\
\midrule
$\delta \bm{r}^h$ & $[-4.821,0.26191,-0.11905]^\top$& m\\
$\delta \dot{\bm{r}}^h$ & $[-1.9367\cdot10^{-10},     0.011927,  -0.00014676]^\top$& m/s\\
$\norm{\delta \bm{r}^h}$ & $4.8296$& m\\

$\bm{\omega}^b$ & $[0,0,0]^\top$ & rad/s\\
$\bm{q}$ & $[0.3774,       0.2877,       0.6255,       0.6193]^\top$ & $-$\\
$\bm{\eta}^b$ & $[ 0.05114, 0.83,  0.55541]^\top$ &$-$\\
$\hat{\bm{m}}$ & $[1,1,1]^\top$ & $-$\\
$S_f$ & $1765.2638$ & cm$^2$\\
$\Delta S$ & $-165.26$ & cm$^2$\\
$\bar{\rho}$&$1.6611\tenE{-12}$& kg/m$^3$\\
\bottomrule
\end{tabular}
\end{table}
}

\withtablespacing{\spaziatura}{
\begin{table}[hbt!]
\caption{ Simulation parameters }
\label{simpartab}
\centering
\begin{tabular}{lll}
\toprule

Quantity& Value &Units \\
\midrule
$M$ & 1 & kg\\
$\matrice{J}$&$\diag([0.9516,0.9203,0.0527])$& kg$\cdot$m$^2$\\
$\bm{s}$ & $[748.3,1246.6,1246.6]^\top$ &cm$^2$\\
$\Gamma _s$ &$[748.2954,1915.1378]$& cm$^2$\\
$\bar{S}_l$ & $1600$ & cm$^2$\\
$\hat{S}_l$ & $100,200,300,400$ & cm$^2$\\
$\bm{x}^h_d$ & $[0,60,0,0]^\top$ &m,m,m/s,m/s\\
$k_s$ & $0.1$ & $1/$s\\
$\matrice{K}_\omega$ & $\matrice{I}_3$ &N$\cdot$m$\cdot$s\\
$\tau$ & $0.1$ & s\\
$\bm{l}^i$ & $[1,0,0]^\top$ & $-$\\
$\alpha_{\max}$ & $100$ & deg\\ 
\bottomrule
\end{tabular}
\end{table}
}

Both spacecraft are initially orbiting on similar $450$ km-altitude quasi-circular polar orbits of period $T=1.5573$ h defined by the orbital elements in \reftable{orbtab}. In our notation, $\mathscr{a}$ is the semimajor-axis, $\mathscr{e}$ the eccentricity, $\mathscr{i}$ the inclination, $\Omega$ the right ascension of the ascending node, $\varpi$ the argument of perigee, and $\theta$ the true anomaly. The Cartesian state conversion is omitted, as it follows the standard procedure described in \cite{curtis2020orbital}. Simulations include:
\begin{enumerate}
    \item the nonlinear dynamics \eqref{2body} with the gravitational acceleration, the drag acceleration, and the $J_2$ perturbation
    \item the density as a function of the geodetic latitude, longitude, and altitude, computed with the NRLMSISE-00 atmospheric model with daily solar flux $F_{10.7}$, 81-day averaged solar flux $\overline{F}_{10.7}$, and geomagnetic index $A_p$, obtained from the OMNI database and NOAA space weather data 
    \item an uncertain leader ES profile $S_l(t)=\bar{S}_l+\hat{S}_l\varepsilon_{rand}(t)$, where $\hat{S}_l$ is the amplitude of the uniformly distributed random noise $\varepsilon_{rand}$ caused by a spurious tumbling
    \item a multiplicative uncertainty on the  $\beta$ coefficient, modeled as $\beta = \bar{\beta}(1+\varepsilon_\beta)$, with $\varepsilon_\beta \in [-20\%,\,20\%]$, to capture the uncertain variability of $\rho$ and  $C_D$, and to check the robustness of the control system
    \item the configuration-belonging constraint \eqref{conficonstr}
    \item an additional attitude constraint (denoted as $\alpha$-constraint): the angle $\alpha$ between $\bm{x}^b$ ($x$-axis of $\mathcal{F}_b$) and the inertial direction $\bm{l}^i$ must be always less than a maximum value
   $\alpha_{max}$, \ie, $(\bm{x}^b)^\top \matrice{R}_i^b(\bm{\omega}_d^b)\bm{l}^i> \cos(\alpha_{\max})$. 
\end{enumerate}

Four case studies were considered. The first without $J_2$ and without the $\alpha$-constraint, the second with $J_2$ and without the $\alpha$-constraint, the third without $J_2$ and with the $\alpha$-constraint, and the fourth with $J_2$ and the $\alpha$-constraint.

By denoting with $\mathscr{d}=\norm{\delta \bm{r}^h}$ the inter-satellite distance, the desired setpoint is $\mathscr{d}_d=60$ m. Control is initiated after detumbling with $\bm{\omega}^b(0)=\bm{0}$ rad/s and $\matrice{R}_i^b(0)$ computed via \eqref{quat2R} and the initial quaternion from \reftable{HCWtab}. Moreover,
$\hat{\bm{m}}=[1,1,1]^\top$, meaning that the follower is in configuration I of \reftable{tabsurfpos}. The LQR design provided
\withtablespacing{0.7}{
\begin{align}
    \bm{g}=
    \left[\begin{array}{c}
          -24.135\\      0.94868  \\    -3395.6   \\    -11181
    \end{array}
    \right]\tenE{-6},\quad  \matrice{P}=\left[\begin{array}{cccc}
0.050083&   -0.0023172   &    7.7172   &    22.643\\
 -0.0023172 &  0.00030127 &    -0.60764  &   -0.89007\\
  7.7172  &   -0.60764   &    2352.8 &      3185.8\\
    22.643  &   -0.89007   &    3185.8     &   10490
    \end{array}\right]\tenE{-5}
\end{align}
}

Simulations were performed over $10T$ with a time step of $\tau=0.1$ s, using Simulink and a $4$th-order Runge-Kutta integration scheme. The constrained minimization problem \eqref{minprobbello} was solved using the \textit{fmincon} solver with a sequential quadratic programming algorithm. Initial guesses were chosen as small perturbations of the current $\bm{\omega}^b$. To prevent large discontinuities in $\bm{\omega}^b$, an upper bound was imposed on its maximum component values.

\begin{figure}[H]
    \centering
    
\includegraphics[width=1\linewidth,trim=0cm 0cm 9cm 0cm,clip]{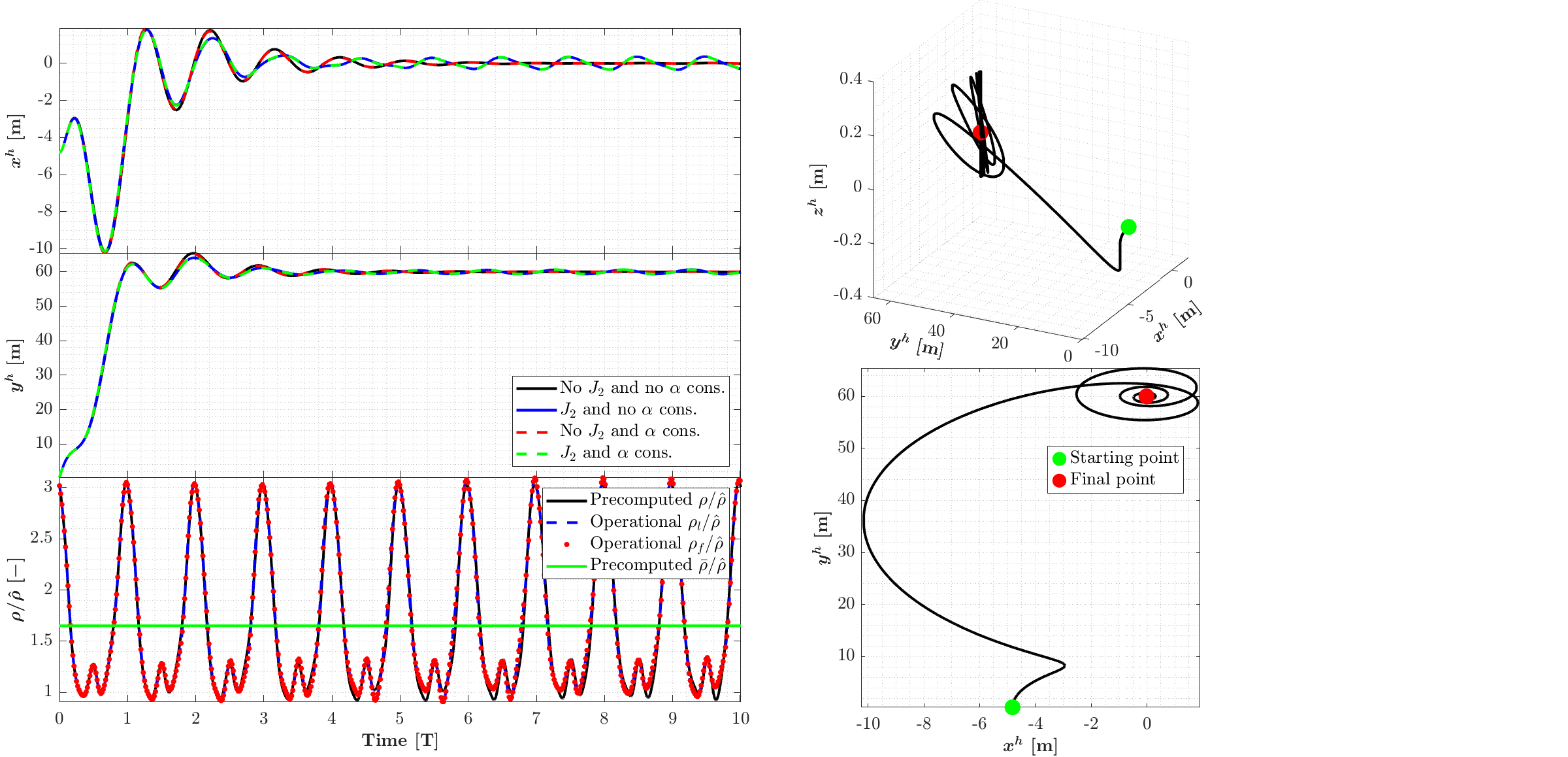}
    \caption{Hill coordinates, atmospheric density ($\hat{\rho}=10^{-12}$ kg/m$^3$), and relative motion in $\mathcal{F}_h$.}
    \label{figura3}
\end{figure}

Figure~\ref{figura3} shows the leader-follower relative motion in $\mathcal{F}_h$. Across all four scenarios, the system achieves convergence to the desired set point in the $x^h-y^h$ plane within approximately $6T$. The inclusion of the $J_2$ acceleration results in oscillations around the equilibrium, in line with the expected short-period effects of the perturbation. Secular effects are absent because of the short duration of the control operations. The 3D trajectory displays a minor motion in the $z^h$ direction, primarily due to three factors: imperfect confinement of $\bm{w}^i$ to the orbital plane, inclination differences, and the $J_2$ effect. Moreover, the density plot shows that the leader/follower operational density closely matches the precomputed profile. The observed oscillations are consistent with the NRLMSISE-00 model \cite{CALABIA2020105207}, making the use of the mean value  $\bar{\rho}=1.6611\tenE{-12}$ kg/m$^3$ an appropriate assumption for preliminary design of the OC via the HCW model \eqref{sistemahill}.

\begin{figure}[H]
    \centering
    
\includegraphics[width=1\linewidth,trim=3cm 1.5cm 2cm 1cm, clip]{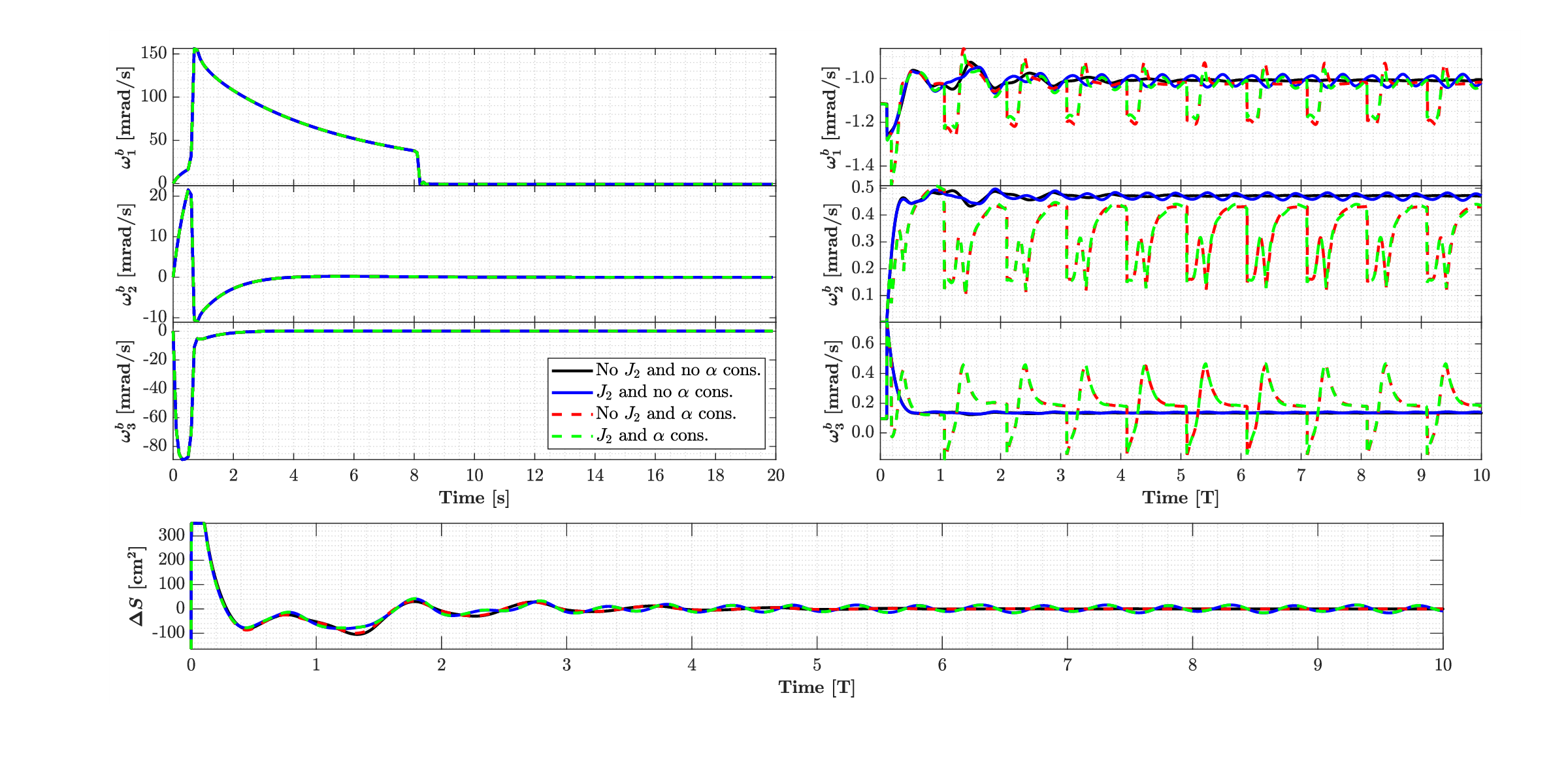}
    \caption{Angular velocity and ES difference.}
    \label{figura4}
\end{figure}

\begin{figure}[H]
    \centering
\includegraphics[width=1\linewidth,trim=0.1cm 0cm 2cm 0.5cm, clip]{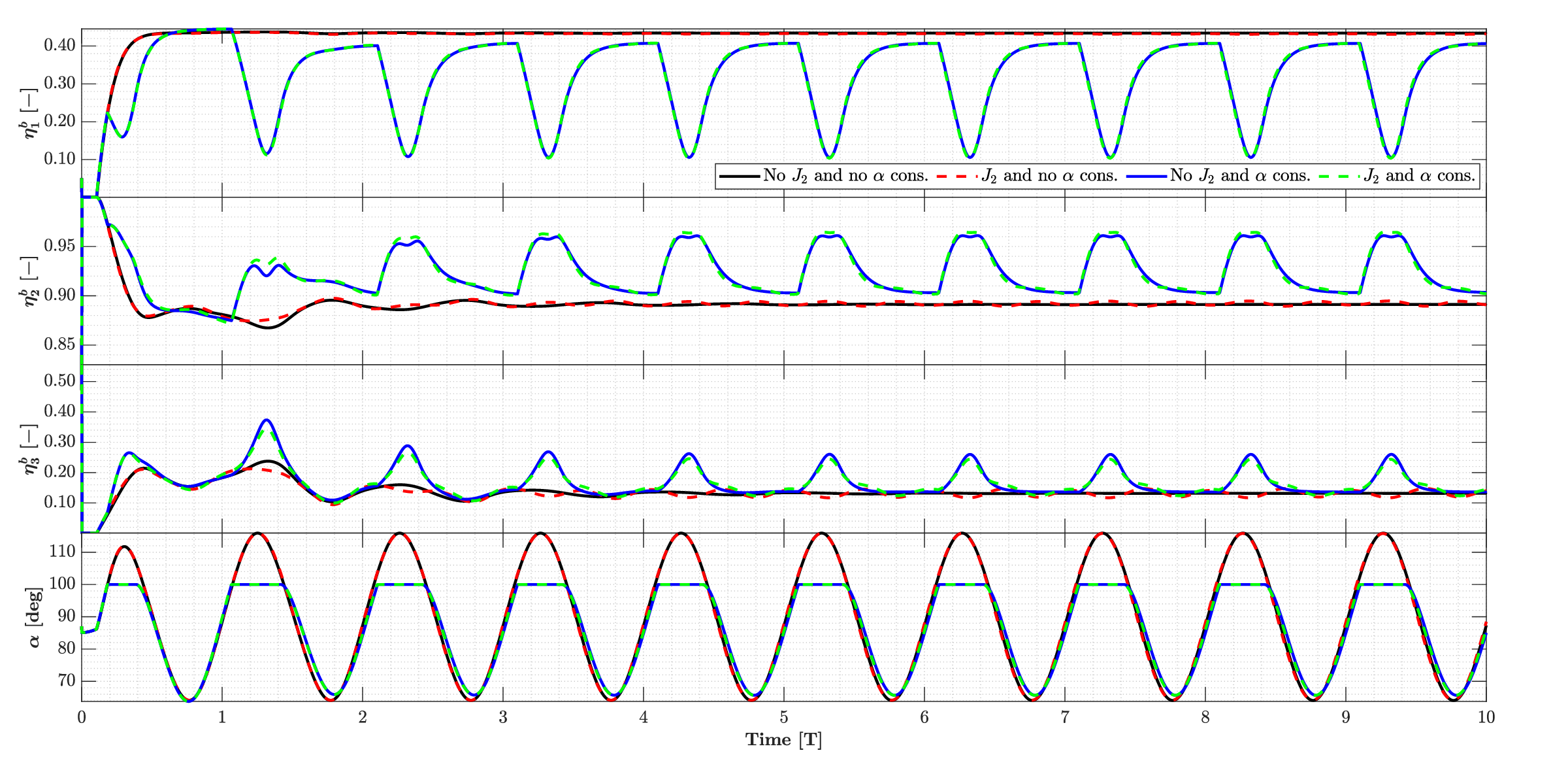}
    \caption{$\bm{\eta}^b$ components and $\alpha$-constraint.}
    \label{figura5}
\end{figure}

The required $\Delta S$ behaviors are shown in \reffig{figura4}. Such profiles imply that $S_{f}\in[1248, 1765.3]$ cm$^2 \subset \Gamma_s$, which satisfies \refrequirement{reqSd}. The $\bm{\omega}^b$-components are shown in \reffig{figura4}. These temporal evolutions ensure satisfaction of the configuration constraint \( \bm{\eta}^b > \bm{0} \) for all \( t \geq 0 \) as shown in \reffig{figura5}. After an initial transient during which the profiles of all four case studies overlap, two distinct behaviors emerge. In the cases without the $\alpha$-constraint, the $\bm{\omega}^b$-components stabilize at dynamic set points that resemble low-amplitude sinusoidal oscillations. In contrast, when the $\alpha$-constraint is enforced, the dynamic set points exhibit more irregular patterns, with larger and varying amplitudes. This adaptation ensures optimal compliance with the additional attitude constraint illustrated in \reffig{figura5}.  

\begin{figure}[H]
    \centering
   
\includegraphics[width=1\linewidth,trim=0cm 0.1cm 8cm 0.05cm,clip]{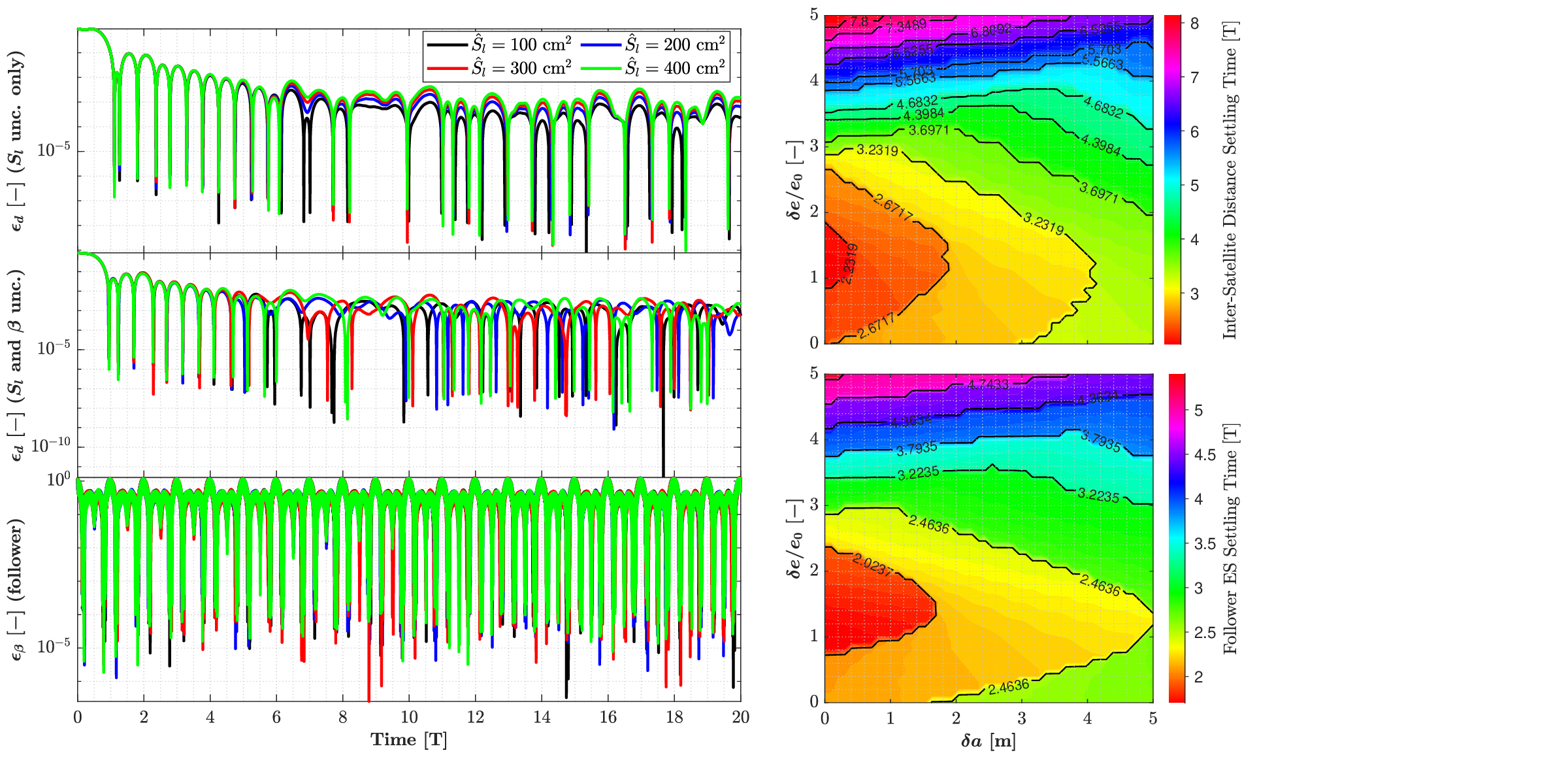}
    \caption{Sensitivity analysis on the inter-satellite distance, $\beta$-coefficient, and orbital initial conditions.}
    \label{figura6}
\end{figure}

We conducted a sensitivity analysis to assess the impact of \refassumption{Assumption3} and \refassumption{Assumption4} on the primary performance metric, \ie, the inter-satellite distance $\mathscr{d}$. Two sets of simulations were performed without $J_2$ and the $\alpha$-constraint. We increased the simulation duration to $20T$ to better see the accumulation of error over time. To analyze the performances meaningfully, we defined two relative errors: $\epsilon_\mathscr{d}=\lvert \mathscr{d}-\mathscr{d}_d\rvert/\mathscr{d}_d$ and $\epsilon_\beta=\lvert\beta-\bar{\beta}\rvert/\bar{\beta}$. The results are shown in \reffig{figura6} (left) for increasing $\hat{S}_l$-values on a logarithmic scale. The first simulation set, which includes only the uncertain $S_l$, shows that random noise minimally affects the transient phase ($t<6T$), where peak magnitudes decrease, and primarily introduces steady-state oscillations. In the worst-case scenario ($\hat{S}_l=400$ cm$^2$), the maximum $\epsilon_\mathscr{d}$-value remains low at approximately $0.72\%$, equating to a discrepancy of $43.27$ cm from the target $60$ m. The second set of simulations exhibits similar behavior but a higher oscillation frequency at steady state. Because of the uncertain $\beta$ coefficient. in the worst-case scenario, the maximum $\epsilon_\mathscr{d}$ is about $1.12\%$, corresponding to a $55.5\%$ increase with respect to the previous set of simulations. Nevertheless, this uncertainty analysis shows that the control system can still achieve an effective separation close to the desired one, thereby providing ultimate boundedness in the presence of uncertainties.

Finally, we conducted a parametric simulation campaign to assess the sensitivity of the OC part of the algorithm to variations in the initial orbital conditions. The initial leader and follower orbital elements are those in \reftable{orbtab},
except for $\mathscr{a}$ and $\mathscr{e}$ where we introduced the offsets $\delta \mathscr{a} \in [0,5]$ m and $\delta \mathscr{e} \in [0,5] \mathscr{e}_f$. Two performance metrics, denoted with $\lambda$, were considered: the settling times of the inter-satellite distance and of the follower ES, defined through the condition
$\lvert \lambda(t)-\lambda_{final} \rvert \leq 0.02 \lvert \lambda_{final}-\lambda_{initial} \rvert$. Moreover, we imposed the constraint of full usage of the admissible range $\Gamma_s$. The results are displayed in \reffig{figura6} (right). The primary outcome is that, in both cases, increasing $\delta \mathscr{a}$ and $\delta\mathscr{ e}$ does not lead to a monotonic increase in the settling time. Instead, two nearly parabolic regions emerge, within which the absolute minima of the performance metrics are attained. By comparing these regions, a global minimum can be identified. This result is particularly useful for preliminary mission design. Indeed, within the virtual leader framework, selecting initial conditions close to the optimal region reduces overall execution time.

\section{Conclusion}
This work presents an efficient control algorithm for CubeSat formation-flight relative positioning via ES management. This research activity draws two main conclusions. The first is that a coupled-orbit-attitude controller can be designed to regulate the inter-satellite distance without reference attitude precomputations, using angular velocity as the primary control variable. In this way, the attitude becomes merely a derived quantity. The second is that relative positioning control can be performed while satisfying additional attitude constraints. Finally, under the specified configuration and orbital assumptions, this work rigorously proves the asymptotic stability of the equilibrium points of the orbit-attitude coupled system in closed loop with the designed controller.

\section*{Acknowledgment}
This work was funded by the Research Council of Norway through the project 335832 QBDebris: A CubeSat formation for space debris characterization. The authors declare that AI tools were used solely for English grammar checking and improvement of written language. No theoretical or numerical technical content was generated using AI.

\bibliography{sample}

\end{document}